\def\dOi{13(3:23)2017}
\newcommand{\ZZ}{\mathbb{Z}}
\newcommand{\Paragraph}[1]{\noindent\textbf{#1}}
\newcommand{\aut}{\mathcal{A}}
\newcommand{\PTIME}{\textsf{PTime}}
\newcommand{\PSPACE}{\textsf{PSpace}}
\newcommand{\EXPTIME}{\textsf{ExpTime}}
\newcommand{\NEXPTIME}{\textsf{NExpTime}}
\newcommand{\coNP}{\textsf{coNP}}
\newcommand{\undecidable}{undecidable}
\newcommand{\class}{\mathcal{C}}
\newcommand{\DFA}{\mathsf{DFA}}
\newcommand{\NFA}{\mathsf{NFA}}
\newcommand{\PDA}{\mathsf{PDA}}
\newcommand{\DPDA}{\mathsf{DPDA}}
\newcommand{\TED}{\mathsf{TED}}
\newcommand{\FED}{\mathsf{FED}}
\newcommand{\N}{\mathbb{N}}
\newcommand{\lang}{\mathcal{L}}
\newcommand{\R}{\textsf{R}}
\newcommand{\prefix}[1]{\textsf{prefix}(#1)}
\newcommand{\algoFEDPDADFA}{\textsf{InfEdsSeq}}
\definecolor{darkgreen}{RGB}{00,180,00}
\newcommand{\ed}{ed}
\newcommand{\wed}{wed}
\newtheorem{theorem}{Theorem}
\newtheorem{proposition}[theorem]{Proposition}
\newcommand{\NN}{\mathbb{N}}
\title[Edit Distance for Pushdown Automata]{Edit Distance for Pushdown Automata}
\author[K.~Chatterjee]{Krishnendu Chatterjee\rsuper a}
\address{{\lsuper{a,b,c}}IST Austria}
\email{\{krishnendu.chatterjee, tah, rasmus.ibsen-jensen\}@ist.ac.at}
\author[T.~A.~Henzinger]{Thomas A. Henzinger\rsuper b}
\address{\vspace{-18 pt}}
\author[R.~Ibsen{-}Jensen]{Rasmus Ibsen{-}Jensen\rsuper c}
\address{\vspace{-18 pt}}
\author[J.~Otop]{Jan Otop\rsuper d}
\address{{\lsuper d}University of Wrocław}
\email{jotop@cs.uni.wroc.pl}
\thanks{This research was funded in part by the European Research Council (ERC) under grant 
agreement 267989 (QUAREM), by the Austrian Science Fund (FWF) projects S11402-N23 (RiSE) and Z211-N23 (Wittgenstein Award),
FWF Grant No P23499- N23, FWF NFN Grant No S11407-N23 (RiSE), 
ERC Start grant (279307: Graph Games), MSR faculty fellows award, and 
by the National Science Centre (NCN), Poland under grant 2014/15/D/ST6/04543.}
\begin{document}
\maketitle

\makeatletter{}\begin{abstract}
The edit distance between two words $w_1, w_2$ is the minimal number of word 
operations (letter insertions, deletions, and substitutions) 
necessary to transform $w_1$ to $w_2$.
The edit distance generalizes to languages $\lang_1, \lang_2$, where 
the edit distance from $\lang_1$ to $\lang_2$ is the minimal number $k$ such that for every word 
from $\lang_1$ there exists a word in $\lang_2$ with edit distance at most $k$.
We study the edit distance computation problem between pushdown automata 
and their subclasses.
The problem of computing edit distance to a pushdown automaton is undecidable,
and in practice, the interesting question is to compute the edit distance 
from a pushdown automaton (the implementation, a standard model for programs with 
recursion) to a regular language (the specification).
In this work, we present a complete picture of decidability and complexity
for the following problems: (1)~deciding whether, for a given threshold $k$, the edit distance from a pushdown automaton to a finite
automaton is at most $k$, and
(2)~deciding whether the edit distance from a pushdown automaton to a finite automaton is finite.
\end{abstract}

\section{Introduction}
\makeatletter{}\noindent{\em Edit distance.}
The edit distance~\cite{levenshtein1966binary} between two words is a well-studied 
metric, which is the minimum number of edit operations (insertion, deletion, 
or substitution of one letter by another) that transforms one word to 
another. 
The edit distance between a word $w$ to a language $\lang$ is the minimal 
edit distance between $w$ and words in $\lang$. 
The edit distance between two languages $\lang_1$ and $\lang_2$ is the supremum over 
all words $w$ in $\lang_1$ of the edit distance between $w$ and $\lang_2$.

\smallskip\noindent{\em Significance of edit distance.}
The notion of {\em edit distance} provides a quantitative measure of 
``how far apart'' are (a)~two words, (b)~words from a language, and (c)~two 
languages.
It forms the basis for quantitatively comparing sequences, a problem that 
arises in many different areas, such as error-correcting codes, 
natural language processing, and computational biology.
The notion of edit distance between languages forms the foundations of a 
quantitative approach to verification.
The traditional qualitative verification (model checking) question is the 
\emph{language inclusion} problem: given an implementation (source language) 
defined by an automaton $\aut_I$ and a specification (target language) 
defined by an automaton $\aut_S$, decide whether the language 
$\lang(\aut_I)$ is included in the language $\lang(\aut_S)$ 
(i.e., $\lang(\aut_I) \subseteq \lang(\aut_S)$). 
The \emph{threshold edit distance} ($\TED$) problem is a generalization of the 
language inclusion problem, which for a given integer threshold $k\geq 0$
asks whether every word in the source language $\lang(\aut_I)$ has edit 
distance at most $k$ to the target language $\lang(\aut_S)$ 
(with $k=0$ we have the traditional language inclusion problem).
For example, in simulation-based verification of an implementation against a 
specification, the measured trace may differ slightly from the 
specification due to inaccuracies in the implementation. 
Thus, a trace of the implementation may not be in the specification. 
However, instead of rejecting the implementation, one can quantify the 
distance between a measured trace and the specification.
Among all implementations that violate a specification, the closer the 
implementation traces are to the specification, the 
better~\cite{Chatterjee08quantitativelanguages,chatterjee2012nested,ModelMeasuring}.
The edit distance problem is also the basis for {\em repairing} 
specifications~\cite{riveros,boundedRiveros}. 

The $\TED$ problem answers a fine-grained question with a fixed bound on the number of edit operations. 
A related problem, the \emph{finite edit distance} ($\FED$) problem, asks
whether there exists $k\geq0$ such that the answer to the $\TED$ problem with threshold
$k$ is YES. Hence, in verification applications we ask the $\FED$ question first, 
and in case of the positive answer, we can ask the $\TED$ question.

\smallskip\noindent{\em Our models.}
In this work we consider the edit distance computation problem between two 
automata $\aut_1$ and $\aut_2$, where $\aut_1$ and $\aut_2$ can be (non-)deterministic finite 
automata or pushdown automata.
Pushdown automata are the standard models for programs with recursion, 
and regular languages are canonical to express the basic properties of systems 
that arise in verification.
We denote by DPDA (resp., PDA) deterministic (resp., non-deterministic) 
pushdown automata, and DFA (resp., NFA) deterministic 
(resp., non-deterministic) finite automata.
We consider source and target languages defined by DFA, NFA, DPDA, and PDA. 
We first present the known results and then our contributions.

\smallskip\noindent{\em Previous results.}
The main results for the classical language inclusion problem are as follows~\cite{HU79}: 
(i)~if the target language is a DFA, then it can be solved in polynomial time;
(ii)~if either the target language is a PDA or both source and target languages are DPDA, 
then it is undecidable; 
(iii)~if the target language is an NFA, then (a) if the source language
is a DFA or NFA, then it is $\PSPACE$-complete, and (b) if the source language 
is a DPDA or PDA, then it is $\PSPACE$-hard and can be solved in $\EXPTIME$ 
(to the best of our knowledge, there is a complexity gap where the upper 
bound is $\EXPTIME$ and the lower bound is $\PSPACE$). 
The $\TED$ and $\FED$ problems were studied for DFA and NFA. 
The $\TED$ problem is $\PSPACE$-complete, when the source and target languages are 
given by DFA or NFA~\cite{riveros,boundedRiveros}. 
When the source language is given by a DFA or NFA, the $\FED$ problem is: 
(i)~$\coNP$-complete, when the target language is given by a DFA~\cite{boundedRiveros},
(ii)~$\PSPACE$-complete, when the target language is given by an NFA~\cite{boundedRiveros}.

\begin{table}[t]
\begin{subtable}
\centering\centering
\begin{tabular}{|c|c|c|c|c|}
\hline
& $\class_2 = \DFA$ & $\class_2 = \NFA$  & $\class_2 = \DPDA$ & $\class_2 = \PDA$ \\ 
\hline
$\class_1 \in \{ \DFA, \NFA\}$ & \multirow{2}{*}{$\PTIME$} & \PSPACE-c   & 
 $\PTIME$ &  \\
\cline{1-1}
\cline{3-4}
{$\class_1 \in \{  \DPDA, \PDA\}$} &  & \textbf{\EXPTIME-c~(Th.~\ref{th:mainTED})} & 
\multicolumn{2}{c|}{ {\undecidable}} \\
\hline
\end{tabular}
\caption{Complexity of the language inclusion problem from $\class_1$ to $\class_2$. Our results are boldfaced.
}
\label{tab:complexityOfINC}
\end{subtable}

\begin{subtable}
\centering\centering
\begin{tabular}{|c|c|c|c|c|}
\hline
& $\class_2 = \DFA$ & $\class_2 = \NFA$  & $\class_2 = \DPDA$ & $\class_2 = \PDA$ \\ 
\hline
$\class_1 \in \{ \DFA, \NFA\}$ & \coNP-c~\cite{boundedRiveros} & \PSPACE-c~\cite{boundedRiveros}   & 
 open~(Conj.~\ref{conj:FEDisUndec}) &  \\
\cline{1-4}
\multirow{2}{*}{$\class_1 \in \{  \DPDA, \PDA\}$} &   \textbf{$\coNP$-complete}& \textbf{\EXPTIME-c} & 
\multicolumn{2}{c|}{\multirow{2}{*}{ \textbf{\undecidable~(Prop.~\ref{p:undecidable})}}} \\
&  \textbf{(Th.~\ref{th:FEDonDFAcoNP})}    & \textbf{(Th.~\ref{th:FEDmain})}  &   \multicolumn{2}{c|}{}\\
\hline
\end{tabular}
\caption{Complexity of $\FED(\class_1, \class_2)$. Our results are boldfaced.
}
\label{tab:complexityOfFED}
\end{subtable}

\begin{subtable}
\centering\centering
\begin{tabular}{|c|c|c|c|c|}
\hline
& $\class_2 = \DFA$ & $\class_2 = \NFA$  & $\class_2 = \DPDA$ & $\class_2 = \PDA$ \\ 
\hline
$\class_1 \in \{ \DFA, \NFA\}$ & \multicolumn{2}{c|}{ \PSPACE-c~\cite{riveros}}   & 
\textbf{\undecidable~(Prop.~\ref{th:fromDPDAUndecidable})} & \\
\cline{1-4}
{$\class_1 \in \{  \DPDA, \PDA\}$} &   \multicolumn{2}{c|}{\textbf{\EXPTIME-c (Th.~\ref{th:mainTED} (1))}} & \multicolumn{2}{c|}{\undecidable}   \\
\hline
\end{tabular}
\caption{Complexity of $\TED(\class_1, \class_2)$. Our results are boldfaced.\hspace{3cm}}
\label{tab:complexityOfTED}
\end{subtable}
\end{table}

\smallskip\noindent{\em Our contributions.}
Our main contributions are as follows.
\begin{enumerate}
\item We show that the $\TED$ problem is $\EXPTIME$-complete, when the source 
language is given by a DPDA or a PDA, and the target language is given by 
a DFA or NFA. 
We present a hardness result which shows that the $\TED$ problem is $\EXPTIME$-hard 
for source languages given as DPDA and target languages given as DFA. 
We present a matching upper bound by showing that for source languages given 
as PDA and target languages given as NFA the problem can be solved in $\EXPTIME$. 
As a consequence of our lower bound we obtain that the language inclusion 
problem for source languages given by DPDA (or PDA) and target languages given by NFA is $\EXPTIME$-complete.
In contrast, if the target language is given by a DPDA, then the $\TED$ 
problem is undecidable even for source languages given as DFA.
Thus we present a complete picture of the complexity of the $\TED$ problem, and 
in addition we close a complexity gap in the classical language inclusion problem.
Note that the interesting verification question is when the implementation 
(source language) is a DPDA (or PDA) and the specification (target language)
is given as a DFA (or NFA), for which we present decidability results
with optimal complexity.

\item We also study the $\FED$ problem.
For finite automata, it was shown in~\cite{riveros,boundedRiveros} that if the answer 
to the $\FED$ problem is YES, then a polynomial bound on $k$ exists.
In contrast, the edit distance can be exponential between DPDA and DFA.
We present a matching exponential upper bound on $k$ for the $\FED$ problem 
from PDA to NFA.
We show that when source languages are given as DPDA or PDA, 
the $\FED$ problem is:
(i)~$\coNP$-complete, if the target languages are given as DFA, and
(ii)~$\EXPTIME$-complete, if the target languages are given as NFA.
\end{enumerate}
The lower bound in (i) holds even for source languages given as DFA~\cite{boundedRiveros}.
Our results are summarized in Tables~\ref{tab:complexityOfINC}, \ref{tab:complexityOfFED}~and~\ref{tab:complexityOfTED}.

This paper extends~\cite{editDistanceConference} in the following two ways:
\begin{itemize}
\item We provide full proofs of all results from~\cite{editDistanceConference}.
\item We show that the $\FED$ problem is $\coNP$-complete if the source language is given by 
DPDA or PDA and the target language is an DFA. This result is technically involved, but it completes 
the complexity picture for the $\FED$ problem in case of 
the source language given by a pushdown automaton and the target language given by a finite automaton. 
\end{itemize}

\smallskip\noindent{\em Related work.}
Algorithms for edit distance have been studied extensively for words~\cite{levenshtein1966binary,AhoPeterson,Okuda,Pighizzini,Karp,Mohri}.
The edit distance between regular languages was studied in~\cite{riveros,boundedRiveros},
between timed automata in~\cite{timedEdit}, and between straight line programs 
in~\cite{lifshits2007processing,DBLP:conf/spire/Gawrychowski12}. 
A near-linear time algorithm to approximate the edit distance for a word to a 
{\sc Dyck} language has been presented in~\cite{Saha14}.

\section{Preliminaries}
\makeatletter{}\subsection{Words, languages and automata}
\newcommand{\dTree}{\mathcal{D}}

\Paragraph{Words.} 
Given a finite alphabet $\Sigma$ of letters, a \emph{word} $w$ is a finite sequence 
of letters.
For a word $w$, we define $w[i]$ as the $i$-th letter of $w$ and $|w|$ 
as its length.
For instance, if $w=abc$, then $w[2]=b$ and $|w|=3$.
We denote the set of all words over $\Sigma$ by $\Sigma^*$.
We use $\epsilon$ to denote the empty word.

\Paragraph{Pushdown automata.} 
A \emph{(non-deterministic) pushdown automaton} (PDA) is a tuple 
$(\Sigma, \Gamma, Q, S, \delta, F)$, where 
$\Sigma$ is the input alphabet, 
$\Gamma$ is a finite stack alphabet, $Q$ is a finite set of states, 
$S \subseteq Q$ is a set of initial states, 
$\delta \subseteq Q \times \Sigma \times (\Gamma \cup \{ \bot \}) \times Q \times \Gamma^*$
is a finite transition relation and $F \subseteq Q$ is a set of final 
(accepting) states. 
A PDA $(\Sigma, \Gamma, Q, S, \delta, F)$ is a \emph{deterministic pushdown automaton} (DPDA)
if $|S| = 1$ and $\delta$ is a function from $Q \times \Sigma \times (\Gamma \cup \{ \bot \})$ to $Q \times \Gamma^*$.
We denote the class of all PDA (resp., DPDA) by $\PDA$ (resp., $\DPDA$).
We define the size of a PDA $\aut = (\Sigma, \Gamma, Q, S, \delta, F)$, denoted by $|\aut|$,
 as $|Q| + |\delta|$.

\Paragraph{Runs of pushdown automata.} Given a PDA $\aut$ and a word $w = w[1] \ldots w[k]$ over $\Sigma$,
a \emph{run} $\pi$ of $\aut$ on $w$ is a sequence of elements from 
$Q\times \Gamma^*$ of length $k+1$ such that
$\pi[0] \in S \times \{ \epsilon \}$ and for every
$i \in \{1, \ldots, k\}$ either (1)~$\pi[i-1] = (q,\epsilon)$,
$\pi[i] = (q',u')$ and $(q, w[i], \bot, q', u') \in \delta$, or 
(2)~$\pi[i-1] = (q,ua)$,
$\pi[i] = (q',uu')$ and $(q, w[i], a, q', u') \in \delta$.
A run $\pi$ of length $k+1$ is \emph{accepting} if $\pi[k] \in F \times \{ \epsilon \}$, i.e.,
the automaton is in an accepting state and the stack is empty. The \emph{language recognized (or accepted) by $\aut$}, denoted $\lang(\aut)$, is the set of words that have an accepting run.

\Paragraph{Context free grammar (CFG).}  
A context free grammar (CFG) is a tuple $(\Sigma,V,S,P)$, where $\Sigma$ is the alphabet, $V$ is a set of {\em non-terminals}, $S \in V$ is a {\em start symbol} and $P$ is a set of {\em production rules}.
A production rule $p$ has the following form $p: A\rightarrow u$, where $A\in V$ and $u \in (\Sigma \cup V)^*$.

A CFG in Chomsky normal form (CNF) is the special case in which each production rule $p$ has one of the following forms (recall that $S$ is the start symbol): (1)~$p: A\rightarrow BC$, where $A\in V$ and $B,C\in V\setminus \{S\}$; or (2)~$p:A\rightarrow \alpha$, where $A\in V$ and $\alpha\in\Sigma$; or (3)~$p: S\rightarrow \epsilon$. It is well-known that any CFG can be brought onto CNF in polynomial time~\cite{C59}.

\Paragraph{Languages generated by CFGs.} Fix a CFG $G=(\Sigma,V,S,P)$. We define \emph{derivation} $\rightarrow_G$ as a relation on $(\Sigma \cup V)^*\times (\Sigma \cup V)^*$ as follows:
$w \rightarrow_G w'$ iff $w = w_1 A w_2$, with $A \in V$, and $w' = w_1 u w_2$ for some $u \in (\Sigma \cup V)^*$ such that $A \rightarrow u$ is a production from $G$.
We define $\rightarrow_G^*$ as the transitive closure of $\rightarrow_G$. The \emph{language generated by $G$}, denoted by $\lang(G) = \{ w \in \Sigma^* \mid  S \rightarrow_G^* w \}$ is the set of words that can be derived from $S$.  
We omit $G$ and write $\rightarrow^*$ for $\rightarrow_G^*$ if $G$ is clear from the context and for any non-terminal $A$ and word $w\in (\Sigma \cup V)^*$, we call $A\rightarrow^* w$ an {\em implied production rule}.
For instance, the CFG $G = (\Sigma,V,S,P)$, where $\Sigma = \{a,b\}$,  $V=\{S\}$, and the rules $P$ are $S\rightarrow a S b$ and $S\rightarrow ab$, generates the language $\{a^nb^n\mid n\geq 1\}$.

 It is well-known~\cite{HU79} that CFGs and PDAs are language-wise polynomial equivalent (i.e., there is a polynomial time procedure that, given a PDA,  outputs a CFG of the same language and vice versa).

\Paragraph{Derivation trees of CFGs.} 
Fix a CFG $G = (\Sigma,V,S,P)$. The CFG defines a (typically infinite) set of {\em derivation trees}. 
A derivation tree is an ordered tree\footnote{In an ordered tree, children 
of every node are ordered.} where (1)~each leaf is associated with an element of $\Sigma\cup V \cup \{\epsilon\}$; and (2)~each internal node $q$ is associated with a non-terminal $A \in V$ and production rule $p: A\rightarrow w$, such that $A$ has $|w|$ children and the $i$-th child, for each $i$, is associated with $w[i]$ if it is a leaf or
a production rule $p' : w[i] \rightarrow w'$ if it is an internal node.
A derivation tree $T$ defines a string $w(T)$ over $\Sigma\cup V$ formed by reading labels of the leaves of $T$ in an ascending lexicographic path order (``from left to right'') while skipping $\epsilon$ symbols. Existence of a derivation tree $T$ with the root $A$ certifies that 
$A \rightarrow_G^* w(T)$. 
For instance given $G$ (as in the previous paragraph), the  derivation tree for $aabb$ is as given in Figure~\ref{fig:ex}.

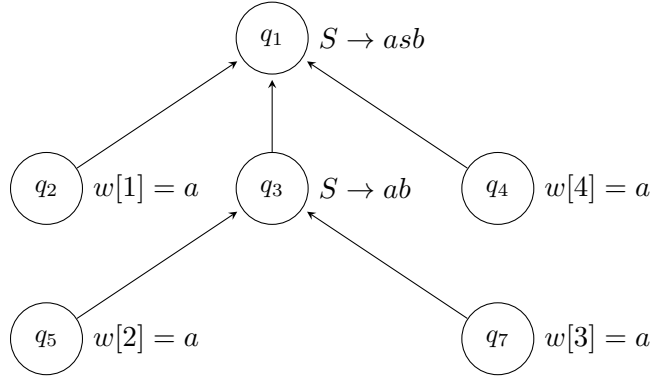
\begin{figure}
\center
\begin{tikzpicture}[node distance=2cm,-{stealth},shorten >=2pt]
\tikzstyle{every state}=[fill=white,draw=black,text=black,font=\small , inner sep=0.05cm]

\node[state,label=right:$S\rightarrow a s b$] (q1) {$q_1$};

\node[state,label=right:$S\rightarrow a b$,below of=q1] (q3) {$q_3$};
\node[state,label=right:{$w[1]=a$}] (q2) at ($(q3)+(-3cm,0)$) {$q_2$};
\node[state,label=right:{$w[4]=a$}] (q4)  at ($(q3)+(3cm,0)$) {$q_4$};

\node[state,label=right:{$w[2]=a$},below of=q2] (q5)  {$q_5$};
\node[state,label=right:{$w[3]=a$},below of=q4] (q7) {$q_7$};

\draw (q2) -> (q1);
\draw (q3) -> (q1);
\draw (q4) -> (q1);
\draw (q5) -> (q3);
\draw (q7) -> (q3);
\end{tikzpicture}
\caption{\label{fig:ex}Example of a derivation tree of $w=aabb$ for the CFG $G$ given in paragraph "Languages generated by CFGs."}
\end{figure}

\Paragraph{Finite automata.} A \emph{non-deterministic finite automaton} (NFA) is 
a pushdown automaton with empty stack alphabet. We will omit $\Gamma$ while referring to 
NFA, i.e., we will consider them as tuples $(\Sigma, Q, S, \delta, F)$.
We denote the class of all NFA by $\NFA$.
Analogously to DPDA we define \emph{deterministic finite automata} (DFA).

\Paragraph{Language inclusion.} Let $\class_1, \class_2$ be subclasses of $\PDA$. 
The \emph{inclusion problem from $\class_1$ in $\class_2$} asks, 
given $\aut_1 \in \class_1$, $\aut_2 \in \class_2$, whether $\lang(\aut_1) \subseteq \lang(\aut_2)$.

\Paragraph{Single letter operations on words.}
A single letter operation on a word can be either an {\em insertion}, a {\em deletion}, or a {\em substitution}.
Given a  letter $a\in \Sigma$ and a number $i$ we define relations $\rightarrow_{I(a,i)}, \rightarrow_{D(a,i)},
\rightarrow_{S(a,i)} \subseteq \Sigma^*\times \Sigma^*$ as follows

\begin{itemize}
\item 
the insert relation $\rightarrow_{I(a,i)}$: for all $w,w'$ we have $w\rightarrow_{I(a,i)}w'$ iff $w'=w[1]\dots w[i] a w[i+1]\dots w[|w|]$. For example, $abc\rightarrow_{I(a,2)}abac$. 
\item the delete relation $\rightarrow_{D(a,i)}$: for all $w,w'$ we have $w\rightarrow_{D(a,i)}w'$ 
iff $w'=w[1]\dots w[i-1] w[i+1]\dots w[|w|]$. For example, $abc\rightarrow_{D(b,2)}ac$. (Note that we ignore the letter parameter for deletions. We use $\rightarrow_{D(a,i)}$ over a notation like $\rightarrow_{D(i)}$ to ensure that all three types of single letter operations have 2 parameters)
\item the substitution relation $\rightarrow_{S(a,i)}$: for all $w,w'$ we have $w\rightarrow_{S(a,i)}w'$ 
iff $w'=w[1]\dots w[i-1]aw[i+1]\dots w[|w|]$. For example, $abc\rightarrow_{S(a,2)}aac$.
\end{itemize}

\Paragraph{Edit distance between words.} Given two words $w_1, w_2$, the edit 
distance between $w_1, w_2$, denoted by $\ed(w_1, w_2)$, is the minimal number of single letter operations:
insertions, deletions, and substitutions, necessary to transform $w_1$ into $w_2$.
More formally, $k:=\ed(w_1,w_2)$ is the length of the shortest sequence $S_1S_2\dots S_k$, 
 where each $S_j$ is an operation $S_j = (P_j,a_j,i_j)\in\{I,D,S\}\times \Sigma\times \NN$ for each $j$, such that there exist words $s_i$, $i\in \{0,\dots,k\}$, for which (1)~$w_1=s_0$, (2)~$w_2=s_k$ and (3)~$s_{j-1}\rightarrow_{P_j(a_j,i_j)}s_{j}$ for all $j\in \{1,\dots,k\}$.

\Paragraph{Edit distance between languages.} 
Let $\lang_1, \lang_2$ be languages. We define the edit distance 
\emph{from} $\lang_1$ \emph{to} $\lang_2$, denoted $\ed(\lang_1, \lang_2)$, as
$\sup_{w_1 \in \lang_1} \inf_{w_2 \in \lang_2} \ed(w_1, w_2)$.
The edit distance between languages is not a distance function. In particular,
it is not symmetric. 
For example: $\ed(\{a\}^*, \{a,b\}^*) = 0$, while
$\ed(\{a,b\}^*, \{a\}^*) = \infty$ because for every $n$, we have 
$\ed(\{ b^n \}, \{a\}^*) = n$.

\subsection{Problem statement}

In this section we define the problems of interest. Then, we recall the previous results 
and succinctly state our results. 

\begin{defi}
For  $\class_1, \class_2 \in \{ \DFA, \NFA, \DPDA, \PDA \}$ we define the following questions:
\begin{enumerate}
\item \emph{The threshold edit distance problem from $\class_1$ to $\class_2$ (denoted $\TED(\class_1,\class_2)$):} 
Given automata $\aut_1 \in \class_1$, $\aut_2 \in \class_2$ and an integer threshold $k\geq 0$, 
decide whether $\ed(\lang(\aut_1), \lang(\aut_2)) \leq k$.

\item
 \emph{The finite edit distance problem from $\class_1$ to $\class_2$ (denoted $\FED(\class_1,\class_2)$):}
Given automata  $\aut_1 \in \class_1$, $\aut_2 \in \class_2$,
decide whether $\ed(\lang(\aut_1), \lang(\aut_2)) < \infty$.

\item \emph{Computation of edit distance from $\class_1$ to $\class_2$:}
Given automata  $\aut_1 \in \class_1$, $\aut_2 \in \class_2$,
compute $\ed(\lang(\aut_1), \lang(\aut_2))$.
\end{enumerate}
\end{defi}

\noindent We establish the complete complexity picture for the $\TED$ problem for all combinations of 
source and target languages given by $\DFA, \NFA, \DPDA$ and $\PDA$:
\begin{enumerate}
\item $\TED$ for regular languages has been studied in~\cite{riveros}, where 
 $\PSPACE$-completeness of $\TED(\class_1, \class_2)$ for  $\class_1, \class_2 \in \{\DFA, \NFA\}$
has been established.
\item In Section~\ref{s:TEDPDAToRegular}, we study  the $\TED$ problem 
for source languages given by pushdown automata and target languages given 
by finite automata. 
We establish $\EXPTIME$-completeness of $\TED(\class_1, \class_2)$ for
 $\class_1 \in \{\DPDA, \PDA\}$ and $\class_2 \in \{\DFA, \NFA\}$.
\item In Section~\ref{s:fromPDA}, we study the $\TED$ problem for target languages
given by pushdown automata. We show that $\TED(\class_1, \class_2)$ is undecidable for 
$\class_1 \in \{\DFA, \NFA,\DPDA, \PDA\}$ and $\class_2 \in \{\DPDA, \PDA\}$.
\end{enumerate}

\noindent We study the $\FED$ problem for all combinations of 
source and target languages given by $\DFA, \NFA, \DPDA$ and $\PDA$ and obtain the following results:
\begin{enumerate}
\item $\FED$ for regular languages has been studied in~\cite{boundedRiveros}.
It has been shown that for $\class_1 \in \{ \DFA,\NFA\}$, the problem 
$\FED(\class_1,\DFA)$ is $\coNP$-complete, while the problem
$\FED(\class_1,\NFA)$ is $\PSPACE$-complete.
\item We show in Section~\ref{s:FEDPDAtoRegular} that for $\class_1 \in \{ \DPDA,\PDA\}$, 
the problem $\FED(\class_1,\NFA)$ is $\EXPTIME$-complete and
the problem $\FED(\class_1,\DFA)$ is $\coNP$-complete.
\item We show in Section~\ref{s:fromPDA} that
(1)~for $\class_1 \in \{\DFA, \NFA,\DPDA, \PDA\}$, the problem $\FED(\class_1, \PDA)$ is undecidable, and
(2)~the problem $\FED(\DPDA, \DPDA)$ is undecidable.
\end{enumerate}

\begin{rem}
{\bf Weighted edit-distance.} 
One could also consider a notion of weighted edit-distance, where a weight function $f:\{I,D,S\}\times \Sigma\rightarrow \ZZ$ is given that to each edit operation and letter assigns a weight.
I.e. inserting a letter $a$ might have a different weight from inserting a letter $b$.
The weighted edit-distance $\wed(w_1,w_2)$ would then be the minimum sum of weights $\sum_{j=1}^k f(P_j,a_j)$ over any $k$ and sequence of edit operation $S_1\dots S_k$,  
 where $S_j=(P_j,a_j,i_j)\in\{I,D,S\}\times \Sigma\times \NN$ for each $j$, such that there exists words $s_i$, $i\in \{0,\dots,k\}$, for which (1)~$w_1=s_0$, (2)~$w_2=s_k$ and (3)~$s_{j-1}\rightarrow_{P_j(a_j,i_j)}s_{j}$  for all $j\in \{1,\dots,k\}$.
 
 Our results extend to the case where $f$ assigns {\bf positive} weights. There are naturally no differences for the FED case (since if the minimum length is infinite, then so too is the sum of weights). There are no  differences either for the TED case, since the only time it comes up (in the following Claim~\ref{cla:letter-by-letter}) there are no differences.
 
 Allowing $f$ to assign zero or infinite weights leads to distances very different from the classical edit distance, such as the Humming distance,
 or the length difference. Such distances are out of scope of this paper.
\end{rem}

\section{Threshold edit distance from pushdown to regular languages}
\makeatletter{}\label{s:TEDPDAToRegular}

In this section we establish the complexity of the $\TED$ problem
from pushdown to finite automata. 

\begin{thm}
(1)~For $\class_1 \in \{\DPDA,\PDA\}$ and $\class_2 \in \{\DFA,\NFA\}$, the $\TED(\class_1, \class_2)$ problem  is $\EXPTIME$-complete.
(2)~For $\class_1 \in \{\DPDA,\PDA\}$, the language inclusion problem from $\class_1$ in $\NFA$ is $\EXPTIME$-complete.
\label{th:mainTED}
\end{thm}

We establish the above theorem as follows:
In Section~\ref{sec:upperBoundTED}, we present an exponential-time algorithm 
for $\TED(\PDA,\NFA)$ (for the upper bound of~(1)). 
Then, in Section~\ref{sec:lowerBoundTED} we show~(2), in a slightly stronger form, 
and reduce it (that stronger problem), to $\TED(\DPDA, \DFA)$, which shows the $\EXPTIME$-hardness 
part of (1).
We conclude this section with a brief discussion on parametrized complexity 
of $\TED$ in Section~\ref{sec:parametricTED}.

\subsection{Upper bound}
\label{sec:upperBoundTED}

We present an \EXPTIME\ algorithm that, given (1)~a PDA $\aut_P$; 
(2)~an NFA $\aut_N$; and (3)~a threshold $t$ given in  binary, decides whether 
the edit distance from $\aut_P$ to $\aut_N$ is above $t$.
The algorithm extends a construction for $\NFA$ by Benedikt et al.~\cite{riveros}.

\paragraph{Intuition.}
The construction uses the idea that for a given word $w$ and an NFA $\aut_N$ 
the following are equivalent: 
(i)~$\ed(w,\aut_N) > t$, and 
(ii)~for each accepting state $s$ of $\aut_N$ and for every word $w'$, if 
  $\aut_N$ can reach $s$ from some initial state upon reading $w'$, then $\ed(w,w') > t$.
We construct a PDA $\aut_I$ which simulates the PDA $\aut_P$ and stores in its states all
states of the NFA $\aut_N$ reachable with at most $t$ edits. 
More precisely, the PDA $\aut_I$ remembers in its states, for every state $s$ of the NFA $\aut_N$,
the minimal number of edit operations necessary to transform the currently read prefix $w_p$ of the input word 
into a word $w'_p$, upon which $\aut_N$ can reach $s$ from some initial state. 
If for some state the number of edit operations exceeds $t$, 
then we associate with this state a special symbol $\#$ to denote this.
Then, we show that a word $w$ accepted by the 
PDA $\aut_P$ has $\ed(w,\aut_N) > t$ iff the automaton $\aut_I$ has a run on $w$
that ends (1)~in an accepting state of simulated $\aut_P$, 
(2)~with the simulated stack of $\aut_P$ empty, and 
(3)~the symbol $\#$ is associated with every accepting state of $\aut_N$.

\begin{lem}
Given (1)~a PDA $\aut_P$; (2)~an NFA $\aut_N$; and (3)~a threshold $t$ given in binary, 
the decision problem of whether $\ed(\aut_P,\aut_N)\leq t$ can be reduced to the emptiness problem 
for a PDA of size $O(|\aut_P|\cdot (t+2)^{|\aut_N|})$.
\label{l:EDtoPDAreduction}
\end{lem}
\newcommand{\Impact}{\mathsf{Impact}}
\begin{proof}
\newcommand{\QN}{Q_{N}}
\newcommand{\FN}{F_{N}}
\newcommand{\QP}{Q_{P}}
\newcommand{\FP}{F_{P}}
\newcommand{\SP}{S_{P}}
\newcommand{\SN}{S_{N}}
\newcommand{\autImpact}{{\aut_{I}}}
\newcommand{\lpair}[1]{\langle #1 \rangle}
\newcommand{\deltaI}{\delta_I}
Let $\QN$ (resp., $\FN$) be the set of states (resp., accepting states) of $\aut_N$.
For $i \in \N$ and a word $w$, we define 
$T_w^i = \{ s \in \QN : $ there exists $w'$ with $\ed(w,w') = i$ such that $\aut_N $ has a run on the word $w'$ ending in $s \}$.
For a pair of states $s, s' \in \QN$ and $\alpha \in \Sigma\cup \{\epsilon\}$, we define  $m(s,s',\alpha)$ as the minimum number of edits needed to apply to $\alpha$ 
so that $\aut_N$ has a run on the resulting word from $s'$ to $s$.
For all $s, s' \in \QN$ and $\alpha \in \Sigma\cup \{\epsilon\}$, we
can compute $m(s,s',\alpha)$ in polynomial time in $|\aut_N|$.
For a state $s \in \QN$ and a word $w$ let $d_{w}^s=\min \{ i \geq 0 \mid s\in T_w^i \}$, i.e., $d_w^s$ is the minimal number of edits necessary 
to apply to $w$ such that $\aut_N$ reaches $s$  upon reading the resulting word.
We will  first prove the following claim.

\begin{clm}\label{cla:letter-by-letter}We have that $d_{w a}^s=\min_{s' \in \QN}(d_{w}^{s'}+m(s,s',a))$\end{clm}
\begin{proof}
Consider a run witnessing $d_{w a}^s$.
 As shown by~\cite{WF74} we can split the run into two parts, one sub-run on $w$ ending in $s'$, for some $s'$, and one sub-run on $a$ starting in $s'$. Clearly, the sub-run on $w$ has used $d_{w}^{s'}$ edits and the one on $a$ has used $m(s,s',a)$ edits.
\end{proof}

Let $\QP$ (resp., $\FP$) be the set of states (resp., accepting states) of the PDA $\aut_P$.
For every word $w$ and every state $q \in \QP$ such that there is a run on $w$ ending in $q$, we define 
$\Impact(w,q,\aut_P,\aut_N,t)$ as a pair 
$(q,\lambda)$ in $\QP \times \{0,1,\dots, t,\#\}^{|\QN|}$, where $\lambda$ is defined as follows:
for every $s \in \QN$ we have $\lambda(s) = d_w^{s}$ if $d_w^{s}\leq t$, and 
$\lambda(s) = \#$ otherwise. 
Clearly, the edit distance from $\aut_P$ to $\aut_N$ exceeds $t$ if there is a 
word $w$ and an accepting state $q$ of $\aut_P$ such that $\Impact(w,q,\aut_P,\aut_N,t)$ is a pair $(q,\lambda)$
and for every $s \in \FN$ we have $\lambda(s) = \#$
(i.e., the word $w$ is in $\lang(\aut_P)$ but any run of $\aut_N$ ending in $\FN$ has distance exceeding $t$).
 
We can now construct an {\em impact automaton}, a PDA $\autImpact$,
with state space $\QP \times \{0,1,\dots, t,\#\}^{\QN}$ and the
following transition relation:
A tuple $(\lpair{q,\lambda_1},a, \gamma, \lpair{q',\lambda_2}, u)$ is a transition of $\autImpact$ iff the following conditions hold:
\begin{enumerate}
\item the tuple projected to the first component of its state (i.e., the tuple $(q,a, \gamma, q', u)$) is a transition of $\aut_P$, and 
\item the second component $\lambda_2$ is computed from $\lambda_1$ using Claim~\ref{cla:letter-by-letter}, i.e.,
for every $s \in \QN$ we have $\lambda_2(s) = \min_{s' \in \QN} (\lambda_1(s')+m(s,s',a))$.
\end{enumerate}
The initial states of $\autImpact$ are $\SP \times \{ \lambda_0 \}$, where 
$\SP$ are initial states of $\aut_P$ and $\lambda_0$ is defined as follows.
For every $s \in \QN$ we have $\lambda_0(s) = \min_{s'\in \SN} m(s,s',\epsilon)$, where $\SN$ are initial states of $\aut_N$
(i.e.,  a start state of $\autImpact$ is a pair of a start state of $\aut_P$ together with the vector where the entry describing $s$ is the minimum number of edits needed to get to the state $s$ on the empty word). 
Also, the accepting states are $\{ \lpair{q,\lambda} \mid q\in \FP$ and for every $s \in \FN$ we have $\lambda(s) =\# \}$. 
Observe that for a run of $\autImpact$ on $w$ ending in $(s,\lambda)$, the vector $\Impact(w,s,\aut_P,\aut_N,t)$ is precisely $(s,\lambda)$.
Thus, the PDA $\autImpact$ accepts a word $w$ iff the edit distance between 
$\aut_P$ and $\aut_N$ is above $t$. 
Since the size of $\autImpact$ is $O(|\aut_P|\cdot (t+2)^{|\aut_N|})$ we obtain 
the desired  result.
\end{proof}

Lemma~\ref{l:EDtoPDAreduction} implies the following:

\begin{lem}\label{l:TEDinexp}
$\TED(\PDA, \NFA)$ is in $\EXPTIME$.
\end{lem}
\begin{proof}
Let $\aut_P, \aut_N$ and $t$ be an instance of $\TED(\PDA, \NFA)$, where
$\aut_P$ is a PDA, $\aut_N$ is an NFA, and $t$ is a threshold given in binary.
By Lemma~\ref{l:EDtoPDAreduction}, we can reduce $\TED$ to the emptiness question of 
a PDA of the size $O(|\aut_P|\cdot (t+2)^{|\aut_N|})$.
Since $|\aut_P|\cdot (t+2)^{|\aut_N|}$ is exponential in $|\aut_P| + |\aut_N| + t$ and
the emptiness problem for PDA can be decided in time polynomial in their size~\cite{HU79},
the result follows.
\end{proof}

\subsection{Lower bound}
\label{sec:lowerBoundTED}

Our $\EXPTIME$-hardness proof of $\TED(\DPDA,\DFA)$ extends the idea from~\cite{riveros} that shows $\PSPACE$-hardness of the edit distance for
DFA. The standard proof of $\PSPACE$-hardness of the universality problem
for $\NFA$~\cite{HU79} is by reduction to the halting problem of a fixed Turing machine $M$ working on a bounded tape. The Turing machine $M$ is the one that simulates other Turing machines (such a machine is called universal). 
The input to that problem is the initial configuration $C_1$ and the tape is bounded by its size $|C_1|$.  
In the reduction, the NFA recognizes the language of all words that do not encode a valid computation of $M$ starting from the initial configuration $C_1$, i.e., 
it accepts if one of the following conditions is violated: 
(1)~the given word is a sequence of configurations,
(2)~the state of the Turing machine and the adjacent letters follow from transitions of $M$, 
(3)~the first configuration is $C_1$
and (4)~the tape's cells are changed only by $M$, i.e., they do not change values spontaneously.
While violation of conditions (1), (2) and (3) can be checked by a DFA of polynomial size, condition~(4)
can be encoded by a polynomial-size NFA but not a polynomial-size DFA. However, to check~(4)
the automaton has to make only a single non-deterministic choice to pick a position in the encoding of the computation, 
which violates~(4), i.e., the value at that position is different from the value $|C_1|+1$ letters further, which
corresponds to the same memory cell in the successive configuration, and the head
of $M$ does not change it. We can transform a non-deterministic automaton $\aut_N$ checking (4) into 
a deterministic automaton $\aut_D$ by encoding such a non-deterministic pick using an external letter.
Since we need only one external symbol, we show that $\lang(\aut_N) = \Sigma^*$ iff
$\ed(\Sigma^*, \lang(\aut_D)) = 1$. This suggests the following definition:

\begin{defi}
An NFA $\aut = (\Sigma, Q, S, \delta, F)$ is \emph{nearly-deterministic} if 
$|S| = 1$ and $\delta = \delta_1 \cup \delta_2$, where $\delta_1$ is a function and in
every accepting run the automaton takes a transition from $\delta_2$ exactly once.
\end{defi}

\begin{lem}
There exists a DPDA $\aut_P$ such that the problem, given a nearly-deterministic NFA $\aut_N$, decide
whether $\lang({\aut_P}) \subseteq \lang(\aut_N)$, is $\EXPTIME$-hard.
\label{l:ExpTimeHardness}
\end{lem}

\begin{proof}
\newcommand{\Ap}{\mathcal{A}_P}
\newcommand{\Ar}{\mathcal{A}_N}
\newcommand{\UATM}{{M}_U}
Consider the \emph{linear-space halting} problem for a (fixed) alternating Turing machine (ATM) $M$:
given an input word $w$ over an alphabet $\Sigma$, decide whether $M$ halts on $w$ with the tape bounded by $|w|$.
There exists an ATM $\UATM$, such that
the linear-space halting problem for $\UATM$ is $\EXPTIME$-complete~\cite{Chandra:1981:ALT:322234.322243}.
We show the $\EXPTIME$-hardness of the problem from the lemma statement by reduction from the 
linear-space halting problem for $\UATM$.

Without loss of generality, we assume that existential and universal transitions of $\UATM$ alternate. Fix an input of length $n$.
The main idea is to construct the language $L$ of words that encode valid terminating computation trees of $\UATM$ on the given input. 
Observe that the language $L$ depends on the given input. 
We encode a single configuration of $\UATM$ as a word of length $n+1$ of the form $\Sigma^i q \Sigma^{n-i}$, where $q$ is a state of $\UATM$. 
Recall that a computation of an ATM is a tree, where every node of the tree is a configuration of $\UATM$,  
and it is accepting if every leaf node is an accepting configuration.
We encode computation trees $T$  of $\UATM$ by traversing $T$ in pre-order and executing the following:
if the current node has only one successor, then write down the current configuration $C$, terminate it with $\#$ 
and move down to the successor node in $T$.
Otherwise, if the current node has two successors $s,t$ in the tree, then write 
down in order (1)~the reversed current configuration $C^R$; and (2)~the results of traversals on $s$ and $t$, each surrounded  by parentheses $($ and $)$, i.e.,
$C^R\, (\, u^{s}\,)\,(\,u^{t}\,)\,$, where $u^{s}$ (resp., $u^{t}$) is the result of the traversal of the sub-tree of $T$ rooted at $s$ (resp., $t$).
Finally, if the current node is a leaf, write down the corresponding configuration and terminate with $\$$.
For example, consider a computation with the initial configuration $C_1$, from which an existential transition 
leads to $C_2$, which in turn has a universal transition to $C_3$ and $C_4$. Such a computation tree
 is encoded as follows:
\[
C_1\, \#\, C_2^R\, \left(\, C_{3} \ldots \$\,\right)\, \left(\, C_{4} \ldots \$\,\right). 
\]

We define automata $\Ar$ and $\Ap$ over the alphabet $\Sigma \cup \{\#,\$,(,)\}$. 
The automaton $\Ar$ is a nearly deterministic NFA that recognizes only (but not all) words 
not encoding valid computation trees of $\UATM$.
More precisely, $\Ar$ accepts in four cases:
(1)~The word does not encode a tree (except that the parentheses may not match as the automaton cannot check that) 
of computation as presented above.
(2)~The initial configuration is different from the one given as the input. 
(3)~The successive configurations, i.e., those that
result from existential transitions or left-branch universal transitions (like $C_2$ to $C_{3}$), are not valid. 
The right-branch universal transitions, which are preceded by the word ``$)($'', are not checked by $\Ar$. 
For example, the consistency of the transition $C_{2}$ to  $C_4$ is not checked by $\Ar$.
Finally, (4)~$\Ar$ accepts words in which at least one final configuration, which is a configuration followed by $\$$, is not final for $\UATM$.
Observe that conditions (1), (2) and (4) can be checked by polynomial-size DFA. Condition (3) can be checked by a polynomial-size nearly-deterministic NFA, which 
picks a position in $C_2$, for which the corresponding position in $C_3$ is faulty (either contains a spontaneous change of the corresponding tape cell or
 it is not compatible with any transition of $\UATM$). Picking such a position correspond to taking transition $\delta_2$ by a nearly-deterministic NFA.
 Thus, the automaton $\Ar$ is a nearly deterministic NFA, which recognizes the union of automata recognizing (1)-(4).
 
Next, we define $\Ap$ as a DPDA that accepts words in which parentheses match and right-branch universal transitions 
are consistent, e.g., it checks consistency of a transition from $C_2$ to $C_{4}$.
The automaton $\Ap$ pushes configurations on even levels of the computation tree (e.g., $C_2^R$), which are reversed, on the stack  and 
pops these configurations from the stack to compare them with the following configuration in the right sub-tree (e.g., $C_{4}$). 
In the example this means that, while the automaton processes 
the sub-word $\left(\, C_{3} \ldots \$\,\right)$, it can use its stack to check consistency of universal transitions in that sub-word.
We assumed that $\UATM$ does not have consecutive universal transitions. This means that, for example, $\aut_P$ does not need to check 
the consistency of $C_{4}$ with its successive configuration.
By construction, we have $L = \lang(\Ap) \cap \lang(\Ar)^c$ (recall that $L$ is the language of encodings of computations of $\UATM$ on the given input) and 
$\UATM$ halts on the given input if and only if $\lang(\Ap) \subseteq \lang(\Ar)$ fails.
Observe that $\Ap$ is fixed for all inputs, since it only depends on the fixed Turing machine $\UATM$.\end{proof}

Now, the following lemma, which is (2) of Theorem~\ref{th:mainTED}, 
follows from Lemma~\ref{l:ExpTimeHardness}.

\begin{lem}
The language inclusion problem from $\DPDA$ to $\NFA$ is $\EXPTIME$-complete.
\label{th:inlusionExpTimeHard}
\end{lem}
\begin{proof}
\newcommand{\Ap}{\mathcal{A}_p}
\newcommand{\Ar}{\mathcal{A}_r}
\newcommand{\Ad}{\mathcal{A}_d}
The $\EXPTIME$ upper bound is immediate (basically, an exponential 
determinization of the NFA, followed by complementation, 
product construction with the PDA, and the emptiness check of the product 
PDA in polynomial time in the size of the product).
$\EXPTIME$-hardness of the problem follows from Lemma~\ref{l:ExpTimeHardness}.
\end{proof}

Now, we show that the inclusion problem of DPDA in nearly-deterministic NFA,
which is $\EXPTIME$-complete by Lemma~\ref{l:ExpTimeHardness},
reduces to $\TED(\DPDA,\DFA)$. In the reduction, we transform a nearly-deterministic NFA $\aut_N$ over the alphabet $\Sigma$
into a DFA $\aut_D$ by encoding a single non-deterministic choice by auxiliary letters. 

\begin{lem}
$\TED(\DPDA,\DFA)$ is $\EXPTIME$-hard.
\label{th:TEDexpHard}
\end{lem}
\begin{proof}
To show $\EXPTIME$-hardness of $\TED(\DPDA,\DFA)$, 
we reduce the inclusion problem of $\DPDA$ in nearly-deterministic NFA to $\TED(\DPDA,\DFA)$.
Consider a DPDA $\aut_P$ and a nearly-deterministic NFA $\aut_N$ over an alphabet $\Sigma$.
Without loss of generality we assume that letters on even positions are $\diamondsuit \in \Sigma$ and $\diamondsuit$
do not appear on the odd positions.
Let $\delta = \delta_1 \cup \delta_2$ be the transition relation of $\aut_N$, where $\delta_1$ is a function
and along each accepting run, $\aut_N$ takes exactly one transition from $\delta_2$.
We transform the NFA $\aut_N$ to a DFA $\aut_D$ by extending the alphabet $\Sigma$ with external letters $\{ 1, \ldots, |\delta_2| \}$.
On letters from $\Sigma$, the automaton $\aut_D$ takes transitions from $\delta_1$. 
On a letter $i \in \{ 1, \ldots, |\delta_2| \}$, the automaton $\aut_D$ takes the $i$-th transition from $\delta_2$.

We claim that $\lang(\aut_P) \subseteq \lang(\aut_N)$ iff 
$\ed(\lang(\aut_P), \lang(\aut_D)) = 1$.
Every word $w \in \lang(\aut_D)$ contains a letter $i \in \{ 1, \ldots, |\delta_2| \}$, which does not belong to $\Sigma$. 
Therefore, $\ed(\lang(\aut_P), \lang(\aut_D)) \geq 1$.
But, if we substitute letter $i$ by the letter in the $i$-th transition of $\delta_2$, we get a word from $\lang(\aut_N)$. 
If we simply delete the letter $i$, we get a word which does not belong to $\lang(\aut_N)$ as it has letter $\diamondsuit$ on an odd position.
Therefore, $\ed(\lang(\aut_P), \lang(\aut_D)) \leq 1$ implies 
$\lang(\aut_P) \subseteq \lang(\aut_N)$. 
Finally, consider a word $w' \in \lang(\aut_N)$. The automaton $\aut_N$ has an accepting run on $w'$, which takes exactly once a transition from $\delta_2$.
Say the taken transition is the $i$-th transition and the position in $w'$ is $p$.
Then, the word $w$, obtained from $w'$ by substituting the letter at position $p$ by letter $i$, is accepted by $\aut_D$.
Therefore, $\lang(\aut_P) \subseteq \lang(\aut_N)$ implies $\ed(\lang(\aut_P), \lang(\aut_D)) \leq 1$.
Thus we have $\lang(\aut_P) \subseteq \lang(\aut_N)$ iff $\ed(\lang(\aut_P), \lang(\aut_D)) = 1$.
\end{proof}

\subsection{Parameterized complexity}
\label{sec:parametricTED}

Problems of high complexity can be practically viable if the complexity
is caused by a parameter, which tends to be small in the applications. In this section 
we discuss the dependence of the complexity of $\TED$ based on its input values.

\begin{proposition}
(1)~There exist a threshold $t > 0$ and a $\DPDA$ $\aut_P$ such that 
the variant of $\TED(\DPDA, \DFA)$, in which the threshold is fixed to $t$ and DPDA 
is fixed to $\aut_P$, is still $\EXPTIME$-complete.
(2)~The variant of $\TED(\PDA, \NFA)$, in which the threshold is given in unary and 
$\NFA$ is fixed, is in $\PTIME$.
\end{proposition}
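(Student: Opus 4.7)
For part (1), the plan is simply to inspect the reductions behind Lemma~\ref{l:ExpTimeHardness} and Lemma~\ref{th:TEDexpHard} and observe that they already produce the parametric hardness we want. In Lemma~\ref{l:ExpTimeHardness}, the DPDA $\aut_P$ is built only from the fixed universal ATM $\UATM$: it checks parenthesis matching and the right-branch universal transitions of $\UATM$, and nothing in its description depends on the input configuration. The varying part of the instance is entirely encoded in the nearly-deterministic NFA $\Ar$. The reduction from inclusion of DPDA in nearly-deterministic NFA to $\TED(\DPDA,\DFA)$ given in Lemma~\ref{th:TEDexpHard} then keeps $\aut_P$ unchanged and sets the threshold to $t=1$, while the DFA $\aut_D$ is obtained from $\Ar$ by a letter-relabelling. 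Thus the same $\aut_P$ and the same $t=1$ witness $\EXPTIME$-hardness as the NFA $\Ar$ (equivalently the DFA $\aut_D$) ranges over inputs, which is exactly the claim of~(1).

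For part (2), the plan is to revisit Lemma~\ref{l:EDtoPDAreduction}. Given an instance $(\aut_P,\aut_N,t)$ of $\TED(\PDA,\NFA)$ with $\aut_N$ fixed and $t$ encoded in unary, the lemma reduces the question $\ed(\aut_P,\aut_N)\leq t$ to emptiness of an ``impact'' PDA of size $O(|\aut_P|\cdot (t+2)^{|\aut_N|})$. Since $\aut_N$ is fixed, $|\aut_N|$ is a constant, so $(t+2)^{|\aut_N|}$ is polynomial in $t$; and since $t$ is given in unary, $t$ is polynomial in the size of the input. Hence the impact PDA $\autImpact$ can be constructed in polynomial time in $|\aut_P|+t$. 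Emptiness of a PDA is decidable in polynomial time in its size~\cite{HU79}, so the whole procedure runs in polynomial time.

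The only place that requires a moment of care is checking that the construction of $\autImpact$ is itself polynomial-time computable, not merely of polynomial size. The transitions of $\autImpact$ are indexed by pairs $(q,\lambda)$ with $\lambda\in\{0,\dots,t,\#\}^{\QN}$, so there are polynomially many such pairs; for each such pair and each transition of $\aut_P$, computing the successor second component requires evaluating $\min_{s'\in \QN}(\lambda_1(s')+m(s,s',a))$ for each $s\in\QN$, and the values $m(s,s',a)$ can be precomputed from $\aut_N$ in constant time (since $\aut_N$ is fixed). This is straightforward and poses no real obstacle; the substantive content of~(2) is just the observation that fixing $\aut_N$ turns the exponential blow-up $(t+2)^{|\aut_N|}$ in Lemma~\ref{l:EDtoPDAreduction} into a polynomial one.
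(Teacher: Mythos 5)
Your proposal is correct and follows essentially the same route as the paper: for (1) it observes that the DPDA in Lemma~\ref{l:ExpTimeHardness} is fixed (depending only on the universal ATM) and that the reduction of Lemma~\ref{th:TEDexpHard} preserves this with threshold $1$; for (2) it notes that fixing $\aut_N$ and giving $t$ in unary makes the impact PDA of Lemma~\ref{l:EDtoPDAreduction} polynomial-size, so its emptiness is decidable in polynomial time. Your extra remark that the impact automaton is also polynomial-time \emph{constructible} (not merely of polynomial size) is a small but welcome refinement of the paper's argument.
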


\begin{proof}
\noindent\emph{(1):} The inclusion problem of DPDA in nearly-deterministic NFA is
$\EXPTIME$-complete even if a DPDA is fixed (Lemma~\ref{l:ExpTimeHardness}).
Therefore, the reduction in Lemma~\ref{th:TEDexpHard} works for threshold $1$ and fixed DPDA.

\noindent\emph{(2):} In the reduction from Lemma~\ref{l:EDtoPDAreduction}, the resulting PDA has size $|\aut_P|\cdot (t+2)^{|\aut_N|}$, where $\aut_P$ is a PDA, $\aut_N$ is an NFA
and $t$ is a threshold. If $\aut_N$ is fixed and $t$ is given in unary, then $|\aut_P|\cdot (t+2)^{|\aut_N|}$
is polynomial in the size of the input and we can decide its non-emptiness in polynomial time.
\end{proof}

Conjecture~\ref{conj1} completes the study of the parametrized complexity of $\TED$.

\begin{conj}\label{conj1}
The variant of $\TED(\PDA, \NFA)$, in which the threshold is given in binary and 
$\NFA$ is fixed, is in $\PTIME$.
\end{conj}

\section{Finite edit distance from pushdown to regular languages}
\makeatletter{}\label{s:FEDPDAtoRegular}
\newcommand{\nonTerm}{T}
\newcommand{\extNon}{B}
\newcommand{\reach}{\textsf{Reach}}

In this section we study the complexity of the $\FED$ problem 
from pushdown automata to finite automata.

\begin{thm}
(1)~For $\class_1 \in \{\DPDA,\PDA\}$ and $\class_2 \in \{\DFA,\NFA\}$ we have the following dichotomy:
for all $\aut_1 \in \class_1, \aut_2 \in \class_2$ either $\ed(\lang(\aut_1),\lang(\aut_2))$  is 
exponentially bounded in $|\aut_1| + |\aut_2|$ or $\ed(\lang(\aut_1),\lang(\aut_2))$ is infinite.
Conversely, for every $n$ there exist a DPDA $\aut_P$ and a DFA $\aut_D$, both of the size $O(n)$, such that 
$\ed(\lang(\aut_P),\lang(\aut_D))$ is finite and exponential in $n$ (i.e.,
the dichotomy is asymptotically tight).
(2)~For $\class_1 \in \{\DPDA,\PDA\}$ the $\FED(\class_1, \NFA)$ problem is $\EXPTIME$-complete.
(3)~For $\class_1 \in \{\DPDA,\PDA\}$ the $\FED(\class_1, \DFA)$ problem is $\coNP$-complete.
(4)~Given a PDA $\aut_P$ and an NFA $\aut_N$, we can compute the edit distance 
$\ed(\lang(\aut_P),\lang(\aut_N))$ in time exponential in $|\aut_P| + |\aut_N|$.
\label{th:FEDmain}
\end{thm}

First, we show in Section~\ref{sec:FEDUpperBoundNFA} the dichotomy of (1), 
which together with Theorem~\ref{th:mainTED}, implies the $\EXPTIME$ upper bound for (2).
Next, in Section~\ref{sec:FEDUpperBoundDFA}, we show that $\FED(\PDA, \DFA)$ problem is in $\coNP$, which together with the results from~\cite{boundedRiveros}
shows (3). 
Finally, in Section~\ref{sec:FEDLowerBound}, we show that $\FED(\DPDA, \NFA)$ is 
$\EXPTIME$-hard.
We also present the exponential lower bound for (1).
Conditions (1), (2), and Theorem~\ref{th:mainTED} imply (3)
(by iteratively testing with increasing thresholds upto exponential bounds 
along with the decision procedure from Theorem~\ref{th:mainTED}).

\subsection{Upper bound for NFA}
\label{sec:FEDUpperBoundNFA}

In this section we consider the problem of deciding whether the edit distance from a PDA to an NFA is finite. 

We first give an overview of the section. Let $\aut_N$ be an NFA and $\aut_P$ a PDA that has $\nonTerm$ non-terminals. We show (in Lemma~\ref{lem:comp_deco}) that for any word $w\in \lang({\aut_P})$ one can break the word into chunks $w=s_1 u_1 \ldots s_k u_k s_{k+1}$, such that $\sum_{i=1}^k |s_k|\leq 2^{\nonTerm}$ and for any $\ell$ word $w_\ell$ defined as $w_{\ell}=s_1 (u_1^\ell) \ldots s_k (u_k^\ell) s_{k+1}$ belongs to $\lang({\aut_P})$ (this is in some sense the opposite of the pumping lemma, since the part that {\em cannot} be pumped is small). We then show (this follows from Lemma~\ref{l:FED-equivalences}) that if there is a word $w \in \lang({\aut_P})$ such that $\ed(w,\lang({\aut_N}))>2^{\nonTerm}$, then for every word $w_\ell$ defined as above we have
 $\ed(w_{\ell+1},\lang({\aut_N}))>\ed(w_\ell,\lang({\aut_N}))$ for all $\ell\geq 0$, showing that the edit-distance $\ed(\lang({\aut_P}),\lang({\aut_N})$ is unbounded. On the other hand, clearly, if $\ed(w,\lang({\aut_N}))\leq  2^{\nonTerm}$ for all $w\in \lang({\aut_P})$, then the edit-distance $\ed(\lang({\aut_P}),\lang({\aut_N}))\leq  2^{\nonTerm}$ by definition.

We start with a reduction of the problem. 
Given a language $\lang$, we define 
$\prefix{\lang} = \{ u : u$ is a prefix of some word from $\lang \}$.
We call an automaton $\aut$ a \emph{safety automaton} if every state of $\aut$ is accepting. 
Note that automata are not necessarily total, i.e. some states might not have an outgoing transition for some input symbols, and thus a safety automaton does not necessarily accept all words.
Note that for every NFA $\aut_N$, the language $\prefix{\lang({\aut_N})}$ is the language of a safety NFA.
We show that $\FED(\PDA, \NFA)$ reduces to $\FED$ from $\PDA$ to safety NFA. 

\begin{lem}\label{lem:prefix_closure}
Let $\aut_P$ be a PDA and $\aut_N$ an NFA. The following inequalities 
hold: 
\[\ed(\lang({\aut_P}),\lang({\aut_N}))\geq \ed(\lang({\aut_P}),\prefix{\lang({\aut_N}))} \geq \ed(\lang({\aut_P}),\lang({\aut_N})) - |\aut_N|\]
\end{lem}
\begin{proof}
Since $\lang({\aut_N}) \subseteq \prefix{\lang({\aut_N})}$, we have 
\[\ed(\lang({\aut_P}),\lang({\aut_N})) \geq \ed(\lang({\aut_P}),\prefix{\lang({\aut_N})})\] as the latter is the minimum over a larger set by definition. 

Hence, we only need to show the other inequality.
First observe that for every $w \in \prefix{\lang({\aut_N})}$, upon reading $w$, the automaton $\aut_N$ can reach a state from which an accepting state is reachable and thus, an accepting state can be 
reached in at most $|\aut_N|$ steps. 
Therefore, for every $w \in \prefix{\lang({\aut_N})}$ there exists $w'$ of length bounded by $|\aut_N|$ such that 
$w w' \in \lang({\aut_N})$. It follows that $\ed(\lang({\aut_P}),\prefix{\lang({\aut_N})}) \geq \ed(\lang({\aut_P}),\lang({\aut_N})) - |\aut_N|$.
\end{proof}

\begin{rem}
Consider an NFA $\aut_N$ recognizing a language such that $\prefix{\lang(\aut_N)} = \Sigma^*$.
For every PDA $\aut_P$, the edit distance $\ed(\lang(\aut_P), \lang(\aut_N))$ is bounded by $|\aut_N|$.
\end{rem}

In the remainder of this section we work with context-free grammars (CFGs) instead of PDAs. There are polynomial-time transformations between CFGs and PDAs that preserve the generated language;
switching from PDAs to CFGs is made only to simplify the proofs. 
The following definition and lemma can be seen as a reverse version of the pumping lemma for context free grammars (in that we ensure that the part which can not be pumped is small). 

As an abuse of notation we will think of a sequence of words as both the concatenation of the words and the sequence.
We define $[1,k]=\{1,\dots,k\}$.

 \smallskip\noindent{\bf Left and right language. }
For a CFG $G$ and a non-terminal $A$, we define the languages 
\begin{align*}
\lang(G,A,L)&=\{w\in \Sigma^*\mid \exists w'\in \Sigma^*
              (A\rightarrow^* wAw') \} \text{\ \ and} \\
\lang(G,A,R)&=
\{w\in \Sigma^*\mid \exists w'\in \Sigma^* (A\rightarrow^* w'Aw) \}\enspace .
\end{align*}
Also, the set of directions $D$ is $D=\{L,R\}$.
We next argue that we can construct a CFG for $\lang(G,A,D)$.

\begin{lem}\label{lem:L(G,A,D)}
Given a CFG $G$, a non-terminal $A$ and a direction $D$, we can construct in polynomial time 
a CFG $G'$ for which $\lang(G')=\lang(G,A,D)$.
\end{lem}
\begin{proof}
We describe the construction of a CFG for $\lang(G,A,L)$ and the construction for $\lang(G,A,R)$ is similar.

To simplify, we consider $G$ to be on CNF. We construct $G'$ as follows: The CFG $G'$ consists of two versions of each non-terminal in $G$, one with a star and one without. I.e. for each non-terminal $X\in G$, we have the non-terminals $X$ and $X^*$ in $G^*$. The idea is that $X^*$ derives prefixes of words derivable from $X$ in $G$, which ends just before a $A$. 
The productions are then as follows: 
\begin{itemize}
\item {\bf Non-starred.} Each production of $G$ is also in $G'$, which defines the productions for the non-starred non-terminals.
\item {\bf Starred.} For each production $X\rightarrow BC$ in $G$, there are productions $X^*\rightarrow BC^*$ and $X^*\rightarrow B^*$ in $G'$.
\item {\bf Additional for $\mathbold{A^*}$.} The non-terminal $A^*$ has the production $A^*\rightarrow \epsilon$ in $G'$ (no other starred non-terminal can produce any terminal). 
\end{itemize}
The start symbol of $G'$ is $A^*$. 
We next argue that $\lang(G')=\lang(G,A,L)$.

\smallskip\noindent{\bf $\mathbold{\lang(G')\subseteq \lang(G,A,L)}$.}
It is easy to see from the productions of $G'$ that the only way to remove a starred non-terminal is to eventually replace a $A^*$ by $\epsilon$. By construction this is the last non-terminal in some prefix $w$ of a word in $G$ with start state $A$ and thus $w\in \lang(G,A,L)$.

\smallskip\noindent{\bf $\mathbold{\lang(G,A,L)\subseteq \lang(G')}$.}
Given a word $w$ in $\lang(G,A,L)$ by definition there is an implied production rule $A\rightarrow^* wAw'$ (in $G$) for some $w'$. 
Given a derivation tree $\dTree$ for the implied production rule $A\rightarrow^* wAw'$, 
it is easy to construct a derivation tree $\dTree'$ for $w$ in $\lang(G')$, indicating that $w$ is in $\lang(G')$. 
The two trees $\dTree$ and $\dTree'$ are identical except as follows:
For a node $v$ in $\dTree'$ let $\dTree(v)$ be the corresponding node in $\dTree$. 
 Let $\ell$ be the leaf in $\dTree'$ such that $\dTree(v)$ is the leaf with label $A$ in $\dTree$. 
 The production rule of $\ell$ is $A^*\rightarrow \epsilon$. Then, consider the path $\pi$ from $\ell$ to the root of $\dTree'$. 
 For each internal node $v$ in $\pi$ where $\dTree(v)$ has production rule $X\rightarrow BC$, we have the following:
\begin{itemize}
\item  {\bf $\mathbold{\pi}$ comes from the left child.}  If $\pi$ goes through the left child, the production rule of $v$ is $X^*\rightarrow B^*$ and the sub-tree under the right child of $\dTree(v)$ is cut out of $\dTree'$ (including that $v$ has no right child in this case).
\item {\bf $\mathbold{\pi}$ comes from the right child.} If $P$ goes through the right child the production rule of $v$ is $X^*\rightarrow BC^*$.
\end{itemize} 
Then, tree $\dTree'$ spells the word $w$ and is a derivation tree in $G'$. Thus $w\in \lang(G')$ and the lemma follows.
\end{proof}

\smallskip\noindent{\bf Realizable. }
Given a CFG $G$ in Chomsky normal form, we define the {\em realizable CFG } $\bar{G}$ (for clarity we do not define it in CNF) that 
\begin{itemize}
\item for each production of the form $P:A\rightarrow a$ in $G$ have the production $P:A\rightarrow \epsilon$,
\item for each production of the form $P:A\rightarrow BC$ in $G$ have the production $P:A\rightarrow a_L^A B C a_R^A$
\end{itemize}
 and no other productions (the language is then especially over the terminals $\{a_D^A\mid D\in \{L,R\}\wedge A\textrm{ is a non-terminal}\}$). 
A sequence is {\em realizable} if it is a sub-sequence of a word in $\lang(\bar{G})$, i.e., it results from deletion of letters from some word of $\lang(\bar{G})$.

\begin{lem}\label{lem:de_word}
Let $a_{D_1}^{A_1} \ldots a_{D_k}^{A_k}$ be a realizable sequence in $G$.
Then for every sequence of words $w_1 \in \lang(G,A_1,D_1), \ldots, w_k \in \lang(G,A_k,D_k)$
there exist words $s_1, \ldots, s_k, s_{k+1}$ such that
$s_1 w_1 \ldots s_k w_k s_{k+1}$ belongs to $\lang(G)$.
\label{l:meaning-of-realizability}
\end{lem} 
\begin{proof}
\newcommand{\reSeq}{\alpha}
Let $\reSeq = a_{D_1}^{A_1} \ldots a_{D_k}^{A_k}$.
We consider two cases: Either $\reSeq \in\lang(\bar{G})$ or not.

 \smallskip\noindent{\bf The case where $\mathbold{\reSeq \in \lang(\bar{G})}$. }
Consider a derivation tree $\dTree$ for $\reSeq$. We translate it into a derivation tree in $G$ for 
$s_1 w_1 \ldots s_k w_k s_{k+1}$, by replacing each production (which are in $\bar{G}$) of the nodes in $\dTree$ with (generalized) productions in $G$.
 
Each leaf node $v$ corresponds to a production $P:A\rightarrow \epsilon$. By definition there exists a production $P:A\rightarrow a$ in $G$ and we then simply replace $P$ in $\bar{G}$ with $P$ in $G$.

Each non-leaf node $v$, with children $b$ and $c$ respectively, corresponds to the use of a production $P:A\rightarrow a_L^A BC a_R^A$, where the $a_L^A$ is the $i$-th letter and $a_R^A$ the $j$-th of $w$ for some $i,j$.
By definition of $\lang(G,A,D)$ we have that there is a production $P':A\rightarrow^* w_i w_j' BC w_j w_i'$ in $G$ for some $w_i',w_j'$. In this case we replace $P$ in $\bar{G}$ with $P'$ in $G$.
(The word $s_{i+1}$ are concatenation of words $w_j'$ and letters derived  by productions $P:A\rightarrow a$ corresponding to $P:A\rightarrow \epsilon$. )

 \smallskip\noindent{\bf The case where $\mathbold{\reSeq \not\in \lang(\bar{G})}$. }
Find a word $\reSeq' \in\lang(\bar{G})$ such that $\reSeq$ is a sub-sequence of $\reSeq'$ (letting $p$ be the sequence of positions defining $\reSeq$ from $\reSeq'$) and do as above with $\reSeq'$ and the sequence of words $s'$ which is an extension of the sequence $s$ of length $|\reSeq'|$ by inserting $\epsilon$ at the remaining positions (i.e., the extension is such that $s$ is the sub-sequence of $s'$ defined by $p$).
\end{proof}

\smallskip\noindent{\bf Compact $\mathbold{G}$-decomposition. }
Given a CFG $G$ with a set of non-terminals of size $\nonTerm$ and a word $w \in \lang(G)$, we define 
a \emph{compact $G$-decomposition} of $w$ as  
$w=s_1 u_1 \ldots s_k u_k s_{k+1}$
 such that
\begin{enumerate}
\item for each $u_i$, there is an associated terminal $a_{D_i}^{A_i}$, such that the sequence $a_{D_1}^{A_1} \ldots a_{D_k}^{A_k}$ is realizable and $u_i\in \lang(G,A_i,D_i)$.
\item for all $\ell \in \N$, the word $w_\ell:=s_1 (u_1)^{\ell} s_2 \ldots s_k (u_k)^{\ell} s_{k+1}$ is in $\lang(G)$. 
\item $|w_0|=\sum_{i=1}^{k+1} |s_i| \leq 2^{\nonTerm}$ and $k \leq 2^{\nonTerm+1}-2$.
\end{enumerate}

\begin{lem}\label{lem:comp_deco}
For every CFG $G$ in CNF, every word $w \in \lang(G)$ admits a compact $G$-decomposition.
\end{lem}
\smallskip\noindent{\em Intuition.} The proof follows by repeated applications of the principle behind the pumping lemma, until the part which is not pumped is small.
\begin{proof}
Fix some $\ell$ and consider some word $w$ in $\lang(G)$ and some derivation tree $\dTree$ for $w$.
 We will greedily construct a compact $G$-representation, using that we do not give bounds on $|u_i|$. 

 \smallskip\noindent{\bf Greedy traversal and the first two properties. }
 The idea is to consider nodes of $\dTree$ in a depth first pre-order traversal (ensuring that when we consider some node we have already considered its ancestors). 
 When we consider some node $v$, we continue with the traversal, 
 unless there exists a descendant $u$ of $v$, such that $\dTree[v]=\dTree[u]$. 
 If there exists such a descendant, let $u'$ be the bottom-most descendant (pick an arbitrary one if there are more than one such bottom-most descendants) such that $A:=\dTree[v]=\dTree[u']$. 
 We say that $(v,u')$ forms a {\em pump pair} of $w$.
Consider subword $\alpha_v, \alpha_{u'}$ of $w$ derived by subtrees of $\dTree$ with roots at $v$ and $u'$ respectively.
 We can then write $\alpha_v$ as $s \alpha_{u'} s'$
 (and hence $A\rightarrow_G^* s A s'$), for some $s$ and $s'$ in the obvious way and $s$ and $s'$ will correspond to $u_i$ and $u_j$ respectively for some $i<j$ ($i$ and $j$ are defined by the traversal that we have already assigned $i-1$ $u$'s then we first visit $v$ and then assign $s$ as the $u_i$ and then we return to the parent of $v$, we have assigned $j-1$ $u$'s and assign $s'$ to be $u_j$). 

 Furthermore, $u_i$ is associated with $a_L^A$ and $u_j$ is associated with $a_R^A$. 
Observe that $A\rightarrow^* u_i A u_j$ implies that $u_i\in \lang(G,A,L)$ and $u_j\in \lang(G,A,R)$ and we therefore have ensured the first property of compact $G$-representation. 
 This also shows that we can replace $u_i$ with $(u_i)^\ell$ and $u_j$ with $(u_j)^\ell$ (because, clearly $A\rightarrow^* (u_i)^\ell A (u_j)^\ell$) and the new word is in $\lang(G)$. 
 Hence, $w_\ell$ is in $\lang(G)$, showing the second property of compact $G$-representation.
 This furthermore defines a derivation tree $\dTree_0$ for $w_0$ (which has $0$ occurrences of words $u_1, u_2, \ldots$), 
 which is the same as $\dTree$, except that for each pump pair $(v,u')$, the node $v$ is replaced with the sub-tree of $\dTree$ with root $u'$.
So as to  not split $u_i$ or $u_j$ up, we continue the traversal on $u'$, which, when it is finished, continues the traversal in the parent of $v$, having finished with $v$.  Notice that this ensures that each node is in at most one pump pair.

  \smallskip\noindent{\bf The third property. }
  Consider the word $w_0$ which has $0$ occurrences of words $u_1, u_2, \ldots$. Observe that in derivation tree $\dTree_0$ for $w_0$, there is at most one occurrence of each non-terminal in each path to the root, since we visited all nodes of $\dTree_0$ in our defining traversal and were greedy. 
  Hence, the height is at most $\nonTerm$ and thus, since the tree is binary, it has at most $2^{\nonTerm-1}$ many leaves, which is then a bound on $|w_0|=\sum_{i=1}^{k+1} |s_i|$.
Notice that each node of $\dTree_0$, being a subset of $\dTree$, is in at most $1$ pump pair of $w$. On the other  hand for each pump pair $(v,u')$ of $w$, we have that $u'$ is a node of $\dTree_0$ by construction. Hence, $w$ has at most $2^{\nonTerm}-1$ many pump pairs. Since each pump pair gives rise to at most $2$ word $u_i, u_{i'}$, we have $k\leq 2^{\nonTerm+1}-2$.
\end{proof}

\smallskip\noindent{\bf Sets closed under reachability.}
Fix an NFA.
We say that a set $Q'$ of states of the NFA is \emph{closed under reachability} if for all $q\in Q'$ and $a \in \Sigma$ we have $\delta(q,a)\subseteq Q'$. Clearly, the set of all states is closed under reachability.

\smallskip\noindent{\bf Reachability sets.}
Fix an NFA.
Given a state $q$ in the NFA and a word $w$, let $Q_q^w$ be the set of states reachable upon reading $w$, starting in $q$. The set of states $\R(w,q)$ is then the set of states reachable from $Q_q^w$ upon reading any word. For a set $Q'$ and word $w$, the set $\R(w,Q')$ is $\bigcup_{q\in Q'} \R(w,q)$.

Note the following: For all $Q'$ and $w$ the set $\R(w,Q')$ is closed under reachability.
If a set $Q'$ is closed under reachability then $\R(w,Q')\subseteq Q'$ for all $w$. 

We have the following {\bf property of reachability sets}:  Fix a word $u$, a number $\ell$, an NFA and a set of states $Q'$ of the NFA, where $Q'$ is closed under reachability. Let $u'$ be a word with $\ell$ occurrences of $u$ (e.g. $u^\ell$).
Consider any word $w$ with edit distance strictly less than $\ell$ from $u'$. Any run on $w$, starting in some state of $Q'$, reaches a state of $\R(u,Q')$. This is because $u$ must be a sub-word of $w$.

\begin{lem}
Let $G$ be a CFG in CNF with a set of non-terminals of size $\nonTerm$ and let $\aut_N$ be a safety NFA with a set of states $Q$.
The following conditions are equivalent:
\begin{enumerate}[label=(\roman*)]
\item the edit distance $\ed(\lang(G),\lang({\aut_N}))$ is infinite,
\item the edit distance $\ed(\lang(G),\lang({\aut_N}))$ exceeds $B:=(2^{\nonTerm+1}-2)\cdot n+2^{\nonTerm}$, and
\item there exists a word $w \in \lang(G)$, with compact $G$-decomposition
$w=(s_i u_i)_{i=1}^{k}s_{k+1}$, such that
$\R(u_k, \R(u_{k-1}, \R(u_{k-2}, \ldots \R(u_1, Q) \ldots ))) = \emptyset$.
\item there exist words $u_1, \ldots, u_k$ such that 
$\R(u_k, \R(u_{k-1}, \R(u_{k-2}, \ldots \R(u_1, Q) \ldots ))) = \emptyset$ and
for every $\ell >0$ 
there exist words $s_1, \ldots, s_{k+1}$ such that 
the word $w_{{\ell}} = (s_i u_i^{\ell})_{i=1}^{k}s_{k+1} $ belongs to $\lang(G)$.
\end{enumerate}
\label{l:FED-equivalences}
\end{lem}

\noindent We use condition~(iv) from Lemma~\ref{l:FED-equivalences} later in Section~\ref{sec:FEDUpperBoundDFA}.
Before we proceed with we argue by example that the nested applications of the $\R$ function in Lemma~\ref{l:FED-equivalences}  is necessary.

\smallskip\noindent{\em The necessity of the recursive applications of the $\R$ operator.}
Consider for instance the alternate requirement that at least one of $\R(u_i,Q)$ is empty, for some $i$. 
This alternate requirement would not capture that 
the pushdown language $\{a^n \# b^n\mid n\in \N\}$ has infinite edit distance to 
the regular language $a^* + b^*$ --- for any word in the pushdown language $w=a^n\#b^n$, for some fixed $n$, a compact $G$-representation of $w$ is $u_1=a^{n}$, $s_2=\#$ and $u_2=b^n$ (and the remaining words are empty). 
But clearly $\R(u_1,Q)$ and $\R(u_2,Q)$ are not empty since both strings are in the regular language. On the other hand $\R(u_2,\R(u_1,Q))$ is empty.

\begin{proof}
The implication \textbf{(i) $\Rightarrow$ (ii)} is trivial.

We show the implication \textbf{(ii) $\Rightarrow$ (iii)} as follows: Consider a word $w \in \lang(G)$ with 
$\ed(w,\lang({\aut_N})) > B$ and its 
compact $G$ representation $w = (s_i u_i)_{i=1}^{k}s_{k+1}$ (which exists due to Lemma~\ref{lem:comp_deco}).
We claim that $\R(u_k, \R(u_{k-1}, \R(u_{k-2}, \ldots \R(u_1, Q) \ldots ))) = \emptyset$.
The argument is by contradiction. Assume that $\R(u_k, \R(u_{k-1}, \R(u_{k-2}, \ldots \R(u_1, Q) \ldots )))\neq \emptyset$ and we will construct a run of $\aut_N$ spelling a word $w'$ in $\lang({\aut_N})$, which has edit distance at most $B$ to $w$.
The description of the run is iteratively in $i$; we start with $i=0$.
First, spell out a word $s_i'$, so that $\aut_N$ reaches some state $q_i$ such that there exists a run on $u_i$. The length of $s_i'$ is at most $n$. Afterwards follow the run on $u_i$ and go to the next iteration. This run spells the word $w':=(s_i' u_i)_{i=1}^{k}$. All the choices of $q_i$'s can be made since $\R(u_k, \R(u_{k-1}, \R(u_{k-2}, \ldots \R(u_1, Q) \ldots )))\neq \emptyset$.
Also, since $\aut_N$ is a safety automata, this run is accepting.
To edit $w'$ into $w$ change each $s_i'$ into $s_i$ and insert $s_{k+1}$ at the end. In the worst case, each $s_i$ is empty except for $i=k+1$ and in that case it requires $k\cdot n+|w_0|\leq B$  edits for deleting each $s_i'$ and inserting $s_{k+1}$ at the end (in any other case, we would be able to substitute some letters when we change some $s_i'$ into $s_i$ which would make the edit distance smaller). This is a contradiction. 

The implication \textbf{(iii) $\Rightarrow$ (iv)} is trivial.

For the implication \textbf{(iv) $\Rightarrow$ (i)} we will argue that for all $\ell$, the word $w_\ell\in \lang(G)$ requires at least $\ell$ edits.
Consider $w_\ell=(s_i u_i^{\ell})_{i=1}^ks_{k+1}$ for some $\ell$.
Any run on $s_1 u_1^\ell$ (a prefix of $w_\ell$) has entered $\R(u_1,Q)$ or made at least $\ell$ edits by the property of reachability sets. Similarly, for any $j$, any run on $(s_i u_i^{\ell})_{i=1}^j$ has either entered $\R(u_j, \R(u_{j-1}, \R(u_{j-2}, \ldots \R(u_1, Q) \ldots )))$ or there has been at least $\ell$ edits. Since  $\R(u_k, \R(u_{k-1}, \R(u_{k-2}, \ldots \R(u_1, Q) \ldots ))) = \emptyset$, no run can enter that set and thus there has been at least $\ell$ edits on $w_\ell$. The implication and thus the lemma follows.
\end{proof}

As a direct consequence of Lemma~\ref{l:FED-equivalences} we have the following.

\begin{thm}
(1)~For a PDA $\aut_P$ and an NFA $\aut_N$ we have $ed(\lang(\aut_P),\lang(\aut_N))$ is either exponentially bounded in 
$|\aut_P|$ or it is infinite.
(2)~For $\class_1 \in \{\DPDA,\PDA\}$ we have $\FED(\class_1, \NFA)$ is in $\EXPTIME$
\end{thm}
\begin{proof}
(1)~The equivalence of (i) and (ii) gives a bound on the maximum finite edit distance. 

(2)~The argument follows from Lemma~\ref{l:TEDinexp} and (1), i.e., we can check with
Lemma~\ref{l:TEDinexp} $\TED$ for $k$ exceeding the bound from (1). 
\end{proof}

\subsection{Upper bound for DFA}
\label{sec:FEDUpperBoundDFA}

We show that the problem $\FED(\class_1, \DFA)$ is $\coNP$-complete for $\class_1 \in \{\DPDA,\PDA\}$.

\smallskip\noindent{\bf $\mathbold{\coNP}$-hardness and attempting to apply known techniques for the upper bound.}
The lower bound follows directly from the fact that $\FED(\DFA, \DFA)$ is $\coNP$-hard~\cite{boundedRiveros}. We thus focus on the upper bound. Note that the upper bound was simple for $\FED(\DFA, \DFA)$, since the edit distance for such is either polynomial or infinite and there is a polynomial length witness in case it is infinite. Hence, one just guess the polynomial sized witness~$w$ and runs a polynomial time algorithm for $\ed(w,\DFA)$ and the result follows. Doing the similar thing for $\FED(\PDA, \DFA)$ would give a $\NEXPTIME$ upper-bound, since the word we need to guess might be of exponential length (thus the above $\EXPTIME$ upper bound for $\FED(\PDA, \NFA)$ is better). To give our algorithm, we will first define extended reachability sets and give a key proposition.

\smallskip\noindent{\bf Closed under concatenation and extended reachability sets.}
A language $L$ is said to be {\em closed under concatenation} if for all $w_1,w_2\in L$ we have $w_1w_2\in L$. Note that $\lang(G,A,D)$, for any non-terminal $A$ and direction $D$, is always closed under concatenation.

We extend reachability sets as follows:
Let $L$ be a context-free language closed under concatenation, let $\aut_D$ be a DFA and let $Q'$ be a subset of the states of $\aut_D$.
We define $\R(L, Q')$ as the intersection $\bigcap_{w \in L} \R(w, Q')$. 
Observe that for every $L$ there exists a finite subset $W \subseteq L$ such that $\bigcap_{w \in W} \R(w, Q') = \R(L, Q')$.

\begin{rem}\label{rem:R_set}
 If $Q'$ is closed under reachability, then for any set
 $W=\{w_1,w_2,\dots,w_k\}\subseteq L$ of words such that~$\bigcap_{w \in W} \R(w, Q')=\R(L,Q')$, we have that $w'=w_1w_2\dots w_k\in L$ and  $\R(w',Q')=\R(L,Q')$. The latter comes from the fact that for any word $w''$ and set $Q''$ closed under reachability, we have that $\R(s_1w''s_2,Q'')\subseteq \R(w'',Q'')$ for all $s_1$ and $s_2$. 
 
Also, observe that we have the following facts about $\R$, from the definition of $\R$:
\begin{enumerate}
\item For any $Q''\subseteq Q'$ and word $w$ we have that $\R(w,Q'')\subseteq \R(w,Q')$.
\item For any language $L$, any $Q'$ and word $w\in L$, we have that $\R(L,Q')\subseteq \R(w,Q')$.
\end{enumerate}
  \label{rem:exists_word_equal_to_L}
\end{rem}

The following proposition is a key to our $\coNP$-algorithm.

\begin{proposition}\label{pro:language_to_words}
For any $k$, any sequence of languages $L_1,\dots,L_k$ and any word $w_i\in L_i$ for each $i$, we have 
\[
\R(L_k,  \ldots, \R(L_1, Q) \ldots )) \subseteq \R(w_k, \ldots, \R(w_1, Q) \ldots ))) \enspace .
\]
Also, if each $L_i$ is closed under concatenation, then there exist words $w_i'\in L_i$ for each $i$, such that
\[
\R(L_k,  \ldots, \R(L_1, Q) \ldots )) = \R(w_k', \ldots, \R(w_1', Q) \ldots )))
\]
\end{proposition}
\begin{proof}

The proposition 
follows from Remark~\ref{rem:exists_word_equal_to_L} and simple induction.
\end{proof}

\smallskip\noindent{\bf $\mathbold{\coNP}$-upper bound algorithm.} 
Our $\coNP$-algorithm \algoFEDPDADFA\ deciding whether the edit distance is finite works as follows:
\begin{enumerate}
\item Guess a sequence $s=a_{D_1}^{A_1}a_{D_2}^{A_2}\dots a_{D_k}^{A_k}$, for some $k$.
\item return ``no'' if $s$ is such that (1)~$s$ is realizable; and (2) \[\R(\lang(G,A_k,D_k), \R(\lang(G,A_{k-1},D_{k-1}),  \ldots \R(\lang(G,A_1,D_1), Q) \ldots ))) = \emptyset\enspace .\]
\item otherwise return yes.
\end{enumerate}

\smallskip\noindent{\bf Requirements for \algoFEDPDADFA\ to be in $\mathbold{\coNP}$.}
For \algoFEDPDADFA\ to be in $\coNP$, we need to give the following:
\begin{enumerate}
\item A polynomial bound on $k$ (so that $s$ is a polynomial sized witness). The bound will be given in Lemma~\ref{l:bound_k}.
\item A polynomial time algorithm to decide whether a sequence $s=a_{D_1}^{A_1}a_{D_2}^{A_2}\dots a_{D_k}^{A_k}$ is realizable. The algorithm will be given in Lemma~\ref{l:realizability-polynomial}.
\item A polynomial time algorithm for computing  $\R(\lang(G,A,D), Q')$ for any CFG $G$, any non-terminal $A$, any direction $D$ and any set  $Q'$ closed under reachability.
This will allow us to decide, given a realizable sequence $s=a_{D_1}^{A_1}a_{D_2}^{A_2}\dots a_{D_k}^{A_k}$, whether 
\[\R(\lang(G,A_k,D_k), \R(\lang(G,A_{k-1},D_{k-1}),  \ldots \R(\lang(G,A_1,D_1), Q) \ldots ))) = \emptyset\enspace ,\] 
by evaluating the expression on the left-hand side inside-out. The algorithm for computing  $\R(\lang(G,A,D), Q')$ will be given in Corollary~\ref{cor:compute-reachable-states}.
\end{enumerate}

We will first argue that the algorithm is correct.

\begin{lem}
The algorithm \algoFEDPDADFA\ is correct\label{l:correct}.
\end{lem}

\begin{proof}
To argue that the algorithm is correct, we just need to argue that a sequence with properties (1) and (2) exists if and only if the edit distance is infinite.

\smallskip\noindent{\bf Such a sequence implies infinite edit distance.}
According to Proposition~\ref{pro:language_to_words}, such a sequence indicates that there are words $w_i\in \lang(G,A_i,D_i)$ for each $i$, such that 
\[\R(w_k, \ldots, \R(w_1, Q) \ldots ))) = \emptyset\enspace .\]
For all $i$, since $\lang(G,A_i,D_i)$ is closed under concatenation, we also have $w_i^\ell\in \lang(G,A_i,D_i)$ for all $\ell>0$.
Thus, by Lemma~\ref{l:meaning-of-realizability}, there exist words $s_1, \ldots, s_k, s_{k+1}$ such that
$s_1 w_1^\ell \ldots s_k w_k^\ell s_{k+1}$ belongs to $\lang(G)$.
Hence, item (iv) of Lemma~\ref{l:FED-equivalences} is satisfied and we get that the edit distance is infinite. 

\smallskip\noindent{\bf Infinite edit distance implies the existence of such a sequence.}
When the edit distance is infinite, according to Lemma~\ref{l:FED-equivalences}(iii) there exists a word $w \in \lang(G)$, with compact $G$-decomposition
$w=(s_i u_i)_{i=1}^{k}s_{k+1}$, such that
$\R(u_k, \R(u_{k-1}, \R(u_{k-2}, \ldots \R(u_1, Q) \ldots ))) = \emptyset$. 
By definition of compact $G$-decomposition, every $u_i$ from the decomposition is associated with a terminal $a_{D_i}^{A_i}$, such that the sequence $a_{D_1}^{A_1} \ldots a_{D_k}^{A_k}$ is realizable (satisfying property (1)) and $u_i\in \lang(G,A_i,D_i)$ for each $i$. By Proposition~\ref{pro:language_to_words} we then have that \[\R(\lang(G,A_k,D_k), \R(\lang(G,A_{k-1},D_{k-1}),  \ldots \R(\lang(G,A_1,D_1), Q) \ldots ))) = \emptyset\enspace ,\] (satisfying property (1)). Thus such a sequence always exists and the lemma follows.
\end{proof}

Next, we will give the bounds and algorithms to show that \algoFEDPDADFA\ is in $\coNP$. First the bound on $k$. 
\begin{lem}\label{l:bound_k}
Let $G$ be a CFG and let $\aut_D$ be a safety DFA with a set of states $Q$.
The following conditions are equivalent:
\begin{enumerate}[label=(\roman*)]
\item the edit distance $\ed(\lang(G),\lang({\aut_D}))$ is infinite.
\item there exists a realizable sequence $(a_{D_i}^{A_i})_{i=1}^m$ with $m \leq |Q|$
such that \[\R(\lang(G,A_m, D_m) , \R(\lang(G,A_{m-1}, D_{m-1}), \ldots, \R(\lang(G,A_1, D_1), Q) \ldots )) = \emptyset\enspace .\]
\end{enumerate}
\end{lem}
\begin{proof}
\noindent{\textbf{(i) implies (ii).}}
Assume that $\ed(\lang(G),\lang({\aut_D}))$ is infinite. By Lemma~\ref{l:FED-equivalences},
there exists a word $w \in \lang(G)$, with compact $G$-decomposition
$w=(s_i u_i)_{i=1}^{k}s_{k+1}$, such that
$\R(u_k, \R(u_{k-1}, \ldots, \R(u_1, Q) \ldots )) = \emptyset$. Observe that $k$ can be exponential. We claim that we can 
pick from $u_1, \ldots, u_k$ a sub-sequence 
of polynomial length in $|Q|$ for which the reachable set of states is empty as well.
Indeed, the sequence $s=\R(u_1, Q), \R(u_2, \R(u_1, Q)), \ldots$
is weakly decreasing with respect to the set inclusion (i.e. if a state is not in $s[i]$, then, it cannot be in $s[j]$ for $j\geq i$, because $\R$ is closed under reachability). 
We select from $1,\ldots,k$ indices $i$ on which the sequence  $\R(u_1, Q), \R(u_2, \R(u_1, Q)), \ldots$ 
strictly decreases and denote the resulting sub-sequence by $\alpha$. 
Then, \[\R(u_{\alpha(m)}, \R(u_{\alpha(m-1)},\ldots, \R(u_{\alpha(1)}, Q) \ldots ))) = \emptyset\enspace .\]
There are at most $|Q|$ such indices, therefore $|\alpha| = m \leq |Q|$.
Using Proposition~\ref{pro:language_to_words}, since $u_{\alpha(i)}\in \lang(G,A_{\alpha(i)},D_{\alpha(i)})$ by compact $G$-decomposition,  we get that
\begin{align*}
\R(\lang(G,A_{\alpha(m)}, D_{\alpha(m)}) ,  \ldots, \R(\lang(G,A_{\alpha(1)}, D_{\alpha(1)}), Q) \ldots )) \subseteq \R(u_{\alpha(m)}, \ldots, \R(u_{\alpha(1)}, Q) \ldots )))
\end{align*}
and hence is empty.

\noindent{\textbf{(ii) implies (i).}} 
Assume that condition~(ii) holds. Then, the algorithm \algoFEDPDADFA\ returns YES, and its correctness (Lemma~\ref{l:correct}) implies (i).
\end{proof}

Next we will describe the algorithm deciding whether a sequence is realizable.

\begin{lem}
Let $G$ be a CFG.  
We can decide in polynomial time whether
a given sequence $s=a_{D_1}^{A_1}\ldots a_{D_k}^{A_k}$ is realizable.
\label{l:realizability-polynomial}
\end{lem}
\begin{proof}
Consider grammar $\bar{G}$ associated with $G$. 
We convert $\bar{G}$ to CNF and add productions $A \rightarrow \epsilon$ for every non-terminal. 
Let the resulting CFG be $G'$.
Observe that $G'$ derives a word of terminals and non-terminals $u$ if and only if 
$\bar{G}$ derives a word $u'$ such that $u$ is a subsequence of $u'$. 
Thus, $(A_1,D_1), \ldots, (A_k, D_k)$ is realizable if and only if 
$A_1^{D_1} \ldots A_k^{D_k}$ is derivable by $G'$. 
Since $G'$ has polynomial size in $G$, we can check whether a word is derivable in $G'$ in polynomial time.
\end{proof}

Finally, we present the algorithm that computes $\R(\lang(G,A,D),Q')$. 
The result will follow as a corollary of the following lemma.

\begin{lem}
Given a CFG $G$, such that $\lang(G)$ is closed under concatenation, a DFA $\aut_D$ with a set of states $!$ and a set of states $Q' \subseteq Q$ closed under reachability, 
the set $\R(\lang(G), Q')$ is computable in polynomial time.
\label{l:compute-reachable-states}
\end{lem}
\begin{proof}
Given a set of states $S \subseteq Q$, we define $\reach(S)$ as the set of states reachable from $Q'$ in $\aut_D$.
We can divide $Q'$ into strongly connected components (SCCs). 
We say that an SCC $C$ is \emph{recurrent} if $\aut_D$ can stay in $C$ upon reading any word from $\lang(G)$.

We claim that $\R(\lang(G), Q')$ is the set of states $Q^*$ reachable from all recurrent SCCs in $Q'$. Clearly, $Q^*$ is closed under reachability.
\begin{itemize}
\item We will first argue that $Q^*\subseteq \R(\lang(G), Q')$.
First, for every recurrent SCC $C$ and every word $w \in \lang(G)$, there is a state $s' \in C$ such that 
$\R(w, s') \in C$. Therefore, $C \subseteq \R(w, s')$. By Remark~\ref{rem:exists_word_equal_to_L}, it follows that $C \subseteq \R(\lang(G), Q')$ and $Q^*\subseteq \R(\lang(G), Q')$. 

\item We will next argue that $\R(\lang(G), Q')\subseteq Q^*$
Observe that for every state $s$ in a non-recurrent SCC $C$ there exists a word $w_s \in \lang(G)$ that forces $\aut_D$ to leave $C$, i.e.,
$\R(w_s, s) \cap C =\emptyset$. Thus, $|\R(w_s, C) \cap C| < |C|$. It follows that we can remove states from $\R(w_s, C) \cap C$
one by one by concatenating words $w_s$ to obtain a word $w_C$ such that $\R(w_C, C) \cap C =\emptyset$.
Since $\lang(G)$ is closed under concatenation, the word $w_C$ belongs to $\lang(G)$.

Let $C_1,C_2,\dots, C_\ell$ be the SCCs in $Q'$ not in $Q^*$ (and thus non-recurrent) ordered topologically. Let the word $w_T$ be the word $w_T=w_{C_1}w_{C_2}\dots w_{C_\ell}$.  Observe that $w_T\in \lang(G)$. 
We have $\R(\lang(G),Q')\subseteq \R(w_T,Q')$ by Remark~\ref{rem:exists_word_equal_to_L} and we argue that $\R(w_T,Q')\subseteq Q^*$. 

Any run starting in $Q^*$ will end in $Q^*$, since $Q^*$ is closed under reachability. 
Observe that $\R(w_{C_1}, C_1 \cup \ldots \cup C_{\ell})$ does not contain $C_1$ as 
$\R(w_{C_1}, C_1) \cap C_1 = \emptyset$ and due to topological order $C_1$ is not reachable from $C_2, \ldots, C_{\ell}$.
Thus, by induction reasoning we have $\R(w_{C_1}, \R(w_{C_2}, \ldots, \R(w_{C_\ell}, C_\ell) \ldots )$ does not 
contain $C_1, \ldots, C_\ell$. Observe that 
$\R(w_T, C_1 \cup \ldots \cup C_\ell) \subseteq \R(w_{C_1}, \R(w_{C_2}, \ldots, \R(w_{C_\ell}, C_\ell) \ldots )$, 
and hence $\R(w_T, C_1 \cup \ldots \cup C_\ell) \subseteq Q^*$.

\end{itemize}

\noindent Given a SCC $C$ and a state $s\in C$, let the automaton $\aut_D^{C,s}$ be $\aut_D$ restricted to $C$ and with start state $s$.
Observe that a SCC $C$ is recurrent if and only if there is a state $s \in C$ such that $\lang(G)\subseteq \lang(\aut_D^{C,s})$. We can then easily test if a SCC is recurrent by trying each possibility for $s\in C$ and testing if $\lang(G)\subseteq \lang(\aut_D^{C,s})$. This can be done in polynomial time since language inclusion of a CFG in a DFA can be tested in polynomial time.

Thus our algorithm is as follows: Compute the set $\{C_1,\dots,C_\ell\}$ of SCCs in $Q'$. For each $i$ test if $C_i$ is recurrent and let $\{C_1',\dots,C'_{\ell'}\}$ be the recurrent SCCs in $Q'$. Return $\bigcup_{i=1}^{\ell'}\reach(C_i')$.
\end{proof}

We next get the wanted corollary.
\begin{cor}\label{cor:compute-reachable-states}
Given a CFG $G$, a non-terminal $A$, a direction $D$, a DFA $\aut_D$ and a set of states $Q'$ of $\aut_D$ closed under reachability, 
the set $\R(\lang(G,A,D), Q')$ is computable in polynomial time.
\end{cor}
\begin{proof}
The proof follows from Lemma~\ref{lem:L(G,A,D)} and Lemma~\ref{l:compute-reachable-states}, using that $\lang(G,A,D)$ is closed under concatenation.
\end{proof}

\begin{lem}
Given a context-free grammar $G$ and a safety DFA $\aut_D$ the algorithm \algoFEDPDADFA\ can be implemented in $\coNP$ and correctly decides whether $\ed(\lang(\aut_P), \lang(\aut_D))$ is finite.
Moreover, if $\aut_D$ is of constant size then \algoFEDPDADFA\ does not need non-determinism (and thus uses polynomial time only).
\label{l:FEDforSafeDFAiscoNP}
\end{lem}
\begin{proof}
The correctness comes from Lemma~\ref{l:correct}. The complexity comes from Lemma~\ref{l:bound_k}, Lemma~\ref{l:realizability-polynomial} and Corollary~\ref{cor:compute-reachable-states}. Note that, in case the DFA is of constant size, then $k$ is bounded by a constant, according to Lemma~\ref{l:bound_k}, and thus there are only a polynomial number of candidates for $s$ and hence all can be checked using polynomial time in total.
\end{proof}

\begin{thm}
For $\class_1 \in \{\DPDA,\PDA\}$ we have $\FED(\class_1, \DFA)$ is $\coNP$-complete.
\label{th:FEDonDFAcoNP}
\end{thm}
\begin{proof}
First, we discuss containment of $\FED(\PDA, \DFA)$ in $\coNP$.
Consider a PDA $\aut_P$ and a DFA $\aut_D$. We can transform $\aut_P$
to a context-free grammar $G$ with $\lang(\aut_P) = \lang(G)$ in polynomial time. 
Also, we can transform $\aut_D$ to a safety DFA $\aut_D'$ recognizing
the language $\prefix{\lang({\aut_D})}$.  
Due to Lemma~\ref{lem:prefix_closure}, we have 
$\ed(\lang(G),\aut_D)$ is finite if and only if $\ed(\lang(G),\aut_D')$ is finite. 
By Lemma~\ref{l:FEDforSafeDFAiscoNP} we can decide whether $\ed(\lang(G),\aut_D')$ is finite
in $\coNP$. Hence, $\FED(\PDA, \DFA)$ and $\FED(\DPDA, \DFA)$ are in $\coNP$.

Is has been shown in~\cite{boundedRiveros} that $\FED(\DFA, \DFA)$ is $\coNP$-hard, therefore
$\FED(\PDA, \DFA)$ and $\FED(\DPDA, \DFA)$ are $\coNP$-hard
\end{proof}

\subsection{Lower bound}
\label{sec:FEDLowerBound}

We have shown the exponential upper bound on the edit distance if it is finite.
As mentioned in the introduction, it is easy to define a family of context free grammars only accepting an exponential length word, using repeated doubling and thus the edit distance can be exponential between DPDAs and DFAs.
We can also show that the inclusion problem reduces to the finite edit distance problem $\FED(\DPDA,\NFA)$ and get the following lemma.

\begin{lem}
$\FED(\DPDA,\NFA)$ is $\EXPTIME$-hard.
\label{th:FEDexpHard}
\end{lem}
\begin{proof}
We show that the inclusion problem of $\DPDA$ in $\NFA$, which is $\EXPTIME$-hard by Lemma~\ref{l:ExpTimeHardness} reduces to 
$\FED(\DPDA,\NFA)$. 
Consider  a DPDA $\aut_P$ and an NFA $\aut_N$.
We define $\widehat{\lang} = \{ \# w_1 \# \ldots \# w_k \# : k \in \N, w_1, \ldots, w_k \in \lang\}$.
Observe that either  $\widehat{\lang_1} \subseteq \widehat{\lang_2}$ or $\ed(\widehat{\lang_1},\widehat{\lang_2}) = \infty$.
Therefore, $\ed(\widehat{\lang_1},\widehat{\lang_2}) < \infty$ if and only if 
$\lang_1 \subseteq \lang_2$. 
In particular, $\lang(\aut_P) \subseteq \lang(\aut_N)$ if and only if
$\ed(\widehat{\lang(\aut_P)},\widehat{\lang(\aut_N)}) < \infty$.
Observe that in polynomial time we can transform $\aut_P$ (resp., $\aut_N$) to 
a DPDA $\widehat{\aut_P}$ (resp., an NFA $\widehat{\aut_N}$) recognizing
$\widehat{\lang(\aut_P)}$ (resp., $\widehat{\lang(\aut_P)}$). 
It suffices to add transitions 
from all final states to all initial states with the letter $\#$, i.e.,
$\{ (q,\#,s) : q \in F, s \in S \}$ for NFA (resp., $\{ (q,\#, \bot,s) : q \in F, s \in S \}$ for DPDA).
For DPDA the additional transitions are possible only with empty stack.
\end{proof}

\section{Edit distance to PDA}
\makeatletter{}\label{s:fromPDA}

Observe that the threshold distance problem from $\DFA$ to $\PDA$ with the fixed threshold $0$
and a fixed DFA recognizing $\Sigma^*$ coincides with the universality problem for $\PDA$. 
Hence, the universality problem for $\PDA$, which is undecidable, reduces to $\TED(\DFA, \PDA)$.
The universality problem for $\PDA$ reduces to $\FED(\DFA, \PDA)$ as well by 
the same argument as in Lemma~\ref{th:FEDexpHard}. 
Finally, we can reduce the inclusion problem from $\DPDA$ in $\DPDA$, which is undecidable, to
$\TED(\DPDA, \DPDA)$ (resp., $\FED(\DPDA, \DPDA)$). Again, we can use 
the same construction as in Lemma~\ref{th:FEDexpHard}. 
In conclusion, we have the following proposition.

\begin{proposition}
(1)~For every class $\class \in \{ \DFA, \NFA, \DPDA, \PDA \}$, the problems
$\TED(\class, \PDA)$ and $\FED(\class, \PDA)$ are undecidable.
(2)~For every class $\class \in \{ \DPDA, \PDA \}$, the problem
$\FED(\class, \DPDA)$ is undecidable.
\label{p:undecidable}
\end{proposition}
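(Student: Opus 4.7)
The plan is to reduce three classical undecidability results—(a) universality of $\PDA$, (b) inclusion of $\PDA$ in $\PDA$, and (c) inclusion of $\DPDA$ in $\DPDA$—to the stated decision problems, reusing the $\widehat{\cdot}$ construction from Lemma~\ref{th:FEDexpHard}.

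For the $\TED$ half of (1), I would fix as source the single DFA recognizing $\Sigma^*$ and the threshold $k=0$. For any PDA $\aut$, one has $\ed(\Sigma^*, \lang(\aut)) \leq 0$ iff $\Sigma^* \subseteq \lang(\aut)$, iff $\aut$ is universal. Since this DFA belongs to all four source classes $\class \in \{\DFA,\NFA,\DPDA,\PDA\}$, a single reduction from $\PDA$-universality yields undecidability of $\TED(\class, \PDA)$ uniformly.

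For the $\FED$ half of (1) and for (2), I would employ the construction from Lemma~\ref{th:FEDexpHard}: fix a fresh letter $\#\notin\Sigma$ and, for a language $\lang$ over $\Sigma$, set $\widehat{\lang} = \{\#w_1\#\cdots\#w_k\# : k\geq 0,\ w_i\in\lang\}$. Given an automaton $\aut$ of any of our four classes, one builds $\widehat{\aut}$ of the same class in polynomial time by adding $\#$-transitions from each final state back to an initial state (for DPDAs only at empty stack; since $\#$ is fresh and the stack is empty, no determinism conflict arises). The key dichotomy, already established in Lemma~\ref{th:FEDexpHard}, is that $\ed(\widehat{\lang_1},\widehat{\lang_2}) < \infty$ iff $\lang_1 \subseteq \lang_2$: any $w\in\lang_1\setminus\lang_2$ can be pumped inside $\widehat{\lang_1}$ as $(\#w)^n\#$, driving the distance to $\widehat{\lang_2}$ unboundedly in $n$. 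Instantiating with $\aut_1$ a DFA for $\Sigma^*$ and $\aut_2$ an arbitrary PDA reduces $\PDA$-universality to $\FED(\class,\PDA)$ for every $\class$, giving the $\FED$ half of (1). Instantiating instead with $\aut_1,\aut_2$ both DPDAs reduces the undecidable $\DPDA$-in-$\DPDA$ inclusion problem to $\FED(\DPDA,\DPDA)$, hence also to $\FED(\PDA,\DPDA)$, giving (2).

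There is no substantive obstacle: the undecidability is imported from well-known results about pushdown automata, and the dichotomy for $\widehat{\cdot}$ has already been justified in Lemma~\ref{th:FEDexpHard}. The only bookkeeping to check is that the $\widehat{\cdot}$ construction preserves each automaton class, which it does because the new $\#$-back-edges are over a fresh letter (and, for DPDAs, are taken only when the stack is empty), so neither the NFA/DFA determinism conditions nor the DPDA transition-function property are violated.
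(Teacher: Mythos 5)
Your proposal is correct and follows essentially the same route as the paper: threshold $0$ with a fixed DFA for $\Sigma^*$ turns $\TED(\class,\PDA)$ into $\PDA$-universality, and the $\widehat{\cdot}$ construction from Lemma~\ref{th:FEDexpHard} handles the $\FED$ cases by reduction from $\PDA$-universality and from $\DPDA$-in-$\DPDA$ inclusion. The extra bookkeeping you supply (class preservation of $\widehat{\cdot}$ and the pumping of $(\#w)^n\#$) only makes explicit what the paper leaves implicit.
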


The results in (1) of Proposition~\ref{p:undecidable} are obtained by reduction from
the universality problem for $\PDA$. However, the universality problem for $\DPDA$ is 
decidable. Still we show that $\TED(\DFA,\DPDA)$ is undecidable.
The overall argument is similar to the one in Section~\ref{sec:lowerBoundTED}.
First, we define nearly-deterministic PDA, a pushdown counterpart of nearly-deterministic NFA.

\begin{defi}
A PDA $\aut = (\Sigma, \Gamma, Q, S, \delta, F)$ is \emph{nearly-deterministic} if 
$|S| = 1$ and $\delta = \delta_1 \cup \delta_2$, where $\delta_1$ is a function and for
every accepting run, the automaton takes a transition from $\delta_2$ exactly once.
\end{defi}

By carefully reviewing the standard reduction of the halting problem for Turing machines to 
the universality problem for pushdown automata~\cite{HU79}, we observe that
the PDA that appear as the product of the reduction are 
nearly-deterministic.

\begin{lem}
The problem, given a nearly-deterministic PDA $\aut_P$, decide
whether $\lang({\aut_P}) = \Sigma^*$, is undecidable.
\label{l:PDAuniversality}
\end{lem}

Using the same construction as in Lemma~\ref{th:TEDexpHard} we show
a reduction of the universality problem for nearly-deterministic PDA to
$\TED(\DFA, \DPDA)$.

\begin{proposition}
For every class $\class \in \{ \DFA, \NFA, \DPDA, \PDA \}$,
the problem $\TED(\class, \DPDA)$ is undecidable.
\label{th:fromDPDAUndecidable}
\end{proposition}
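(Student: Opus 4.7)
The plan is to mirror the construction from Lemma~\ref{th:TEDexpHard} but in the ``pushdown target'' setting, reducing the universality problem for nearly-deterministic PDA (undecidable by the preceding lemma) to $\TED(\DFA,\DPDA)$. Since any DFA is also an NFA, a DPDA and a PDA, undecidability for the $\DFA$ source class immediately yields undecidability for all four source classes listed in the proposition.

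Concretely, given a nearly-deterministic PDA $\aut_P = (\Sigma, \Gamma, Q, \{s_0\}, \delta, F)$ with $\delta = \delta_1 \cup \delta_2$ where $\delta_1$ is deterministic and each accepting run uses $\delta_2$ exactly once, I would build a DPDA $\aut_D$ over the extended alphabet $\Sigma \cup \{1,\dots,|\delta_2|\}$. On input letters from $\Sigma$, $\aut_D$ follows $\delta_1$; on the auxiliary letter $i$, $\aut_D$ deterministically executes the $i$-th transition of $\delta_2$. As in the proof of Lemma~\ref{th:TEDexpHard}, I would assume w.l.o.g.\ that a fixed letter $\diamondsuit\in\Sigma$ occupies every even position of words accepted by $\aut_P$ (and does not occur on odd positions), so that insertions or deletions break the parity pattern and cannot be used as cheap edits.

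The key claim is then $\lang(\aut_P) = \Sigma^*$ iff $\ed(\Sigma^*,\lang(\aut_D)) \le 1$ (where $\Sigma^*$ is recognized by a one-state DFA, placing the reduction in $\TED(\DFA,\DPDA)$). Every word of $\lang(\aut_D)$ contains exactly one auxiliary letter, hence is not in $\Sigma^*$, so $\ed \ge 1$. Conversely, if $\ed \le 1$, every $w\in\Sigma^*$ lies within one edit of some $w'\in\lang(\aut_D)$; the $\diamondsuit$-alternation rules out deletions and insertions, so the single edit must be a substitution of a $\Sigma$-letter by some $i$, and by construction this witnesses an accepting run of $\aut_P$ on $w$. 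Thus $\lang(\aut_P)=\Sigma^*$ exactly when the edit distance from the trivial DFA to $\aut_D$ is at most $1$.

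The main potential obstacle is justifying the preceding lemma, i.e.\ that the standard Turing-machine-to-PDA universality reduction actually produces a nearly-deterministic PDA (one nondeterministic choice per accepting run, used to guess the position at which the computation is invalid). Once this is in hand, the step from nearly-deterministic PDA to DPDA is transition-by-transition: no product construction with a finite-state companion is needed, so determinism and the pushdown structure are preserved, and the $\diamondsuit$-alternation bookkeeping transfers verbatim from Lemma~\ref{th:TEDexpHard} without interaction with the stack.
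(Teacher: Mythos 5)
Your proposal is correct and follows essentially the same route as the paper: it reduces the universality problem for nearly-deterministic PDA (the preceding lemma) to $\TED(\DFA,\DPDA)$ by determinizing the single nondeterministic choice with auxiliary letters, exactly as in the construction of Lemma~\ref{th:TEDexpHard}, and then notes that undecidability for the $\DFA$ source class propagates to the other three. The paper's own proof is just a one-line appeal to that same construction, so your write-up is a faithful (and more detailed) rendering of it.
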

\begin{proof} 
We show that $\TED(\DFA, \DPDA)$ (resp., $\FED(\DFA, \PDA)$) is undecidable as it implies undecidability of the rest of the problems.
The same construction 
as in the proof of Lemma~\ref{th:TEDexpHard} shows a reduction of 
the universality problem for nearly-deterministic PDA, which is undecidable by Lemma~\ref{l:PDAuniversality}, to $\TED(\DFA, \DPDA)$.
\end{proof}

We presented the complete decidability picture for the problems $\TED(\class_1, \class_2)$, for
 $\class_1 \in \{ \DFA, \NFA, \DPDA, \PDA \}$ and $\class_2 \in \{\DPDA, \PDA\}$. 
To complete the characterization of the problems $\FED(\class_1, \class_2)$,
with respect to their decidability, we still need to settle the decidability (and complexity)
status of $\FED(\DFA,\DPDA)$. We leave it as an open problem, but conjecture that it is $\coNP$-complete. 

\begin{conj}
$\FED(\DFA,\DPDA)$ is $\coNP$-complete.
\label{conj:FEDisUndec}
\end{conj}

\section{Conclusions}
In this work we consider the edit distance problem for PDA and its subclasses
and present a complete decidability and complexity picture for the $\TED$ problem.
We leave some open conjectures about the parametrized complexity of the $\TED$
problem, and the complexity of $\FED$ problem when the target is a DPDA.
Moreover, one can study the edit distance for other classes of languages between 
regular languages and context-free languages such as visibly pushdown automata.

While in this work we count the number of edit operations, a different notion is
to measure the average number of edit operations. 
The average-based measure is undecidable in many cases even for finite automata,
and in cases when it is decidable reduces to mean-payoff games on graphs~\cite{limavgRiveros}.
Since mean-payoff games on pushdown graphs are undecidable~\cite{CV12}, most of the problems 
related to the edit distance question for average measure for DPDA and PDA are likely
to be undecidable.

\noindent\textbf{Acknowledgements}. We wanted to thank the anonymous reviewers for their thorough and helpful
reviews, which help us to improve this paper.

\bibliographystyle{plain}
\bibliography{papers}

\end{document}